\newtheorem{theorem}{Theorem}[section]
\newtheorem{lemma}[theorem]{Lemma}
\newtheorem{proposition}[theorem]{Proposition}
\newtheorem{rem}[theorem]{Remark}
\newcounter{FNC}[page]
\def\fauxfootnote#1{{\addtocounter{FNC}{2}$^\fnsymbol{FNC}$%
     \let\thefootnote\relax\footnotetext{$^\fnsymbol{FNC}$#1}}}
\newcommand{\dcol}{\Blue}
\newcommand{\demph}[1]{\dcol{{\sl #1}}}
\newcommand{\ba}{{\bf a}}
\newcommand{\bb}{{\bf b}}
\newcommand{\bc}{{\bf c}}
\newcommand{\bff}{{\bf f}}
\newcommand{\bv}{{\bf v}}
\newcommand{\calA}{{\mathcal A}}
\newcommand{\calB}{{\mathcal B}}
\newcommand{\calF}{{\mathcal F}}
\newcommand{\calG}{{\mathcal G}}
\newcommand{\calT}{{\mathcal T}}
\newcommand{\calS}{{\mathcal S}}
\newcommand{\dist}{\mbox{\rm distance}}
\newcommand{\conv}{\mbox{\rm conv}}
\newcommand{\C}{{\mathbb C}}
\renewcommand{\P}{{\mathbb P}}
\newcommand{\R}{{\mathbb R}}
\newcommand{\N}{{\mathbb N}}
\newcommand{\Z}{{\mathbb Z}}
\newcommand{\vol}{\mbox{\rm volume}}
\newcommand{\degree}{\mbox{\rm degree}}
\newcommand{\simplex}{\includegraphics{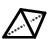}}
\newcommand{\bsimplex}{\includegraphics{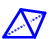}}
\newcommand{\tri}{\includegraphics{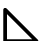}}
\newcommand{\btri}{\includegraphics{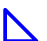}}
\title{Toric degenerations of B\'ezier patches}
\author[L.~Garc{\'\i}a-Puente]{Luis David Garc{\'\i}a-Puente}
\address{Department of Mathematics and Statistics\\
         Sam Houston State University\\
         Huntsville\\
         TX \ 77341\\
         USA}
\email{lgarcia@shsu.edu}
\urladdr{www.shsu.edu/\~{}ldg005}
\author[F.~Sottile]{Frank Sottile}
\address{Department of Mathematics\\
         Texas A\&M University\\
         College Station\\
         Texas \ 77843\\
         USA}
\email{sottile@math.tamu.edu}
\urladdr{http://www.math.tamu.edu/\~{}sottile}
\author[C-G.~Zhu]{Chungang Zhu}
\address{School of Mathematical Sciences\\
          Dalian University of Technology\\
          Dalian 116024
          China}
\email{cgzhu@dlut.edu.cn}
\urladdr{http://gs1.dlut.edu.cn/Supervisor/zhuchungan.page}
\keywords{control polytope, B\'ezier patch}
\thanks{Work of Garc\'ia-Puente and Sottile supported in part by the Texas Advanced Research
   Program under Grant No. 010366-0054-2007}
\thanks{Research of Sottile supported in part by NSF grant DMS-070105}
\thanks{Research of Zhu supported in part by NSFC grant 10801024, U0935004, and this work
  was conducted in part at Texas A\&M University.}
\begin{document}
\begin{abstract}
 The control polygon of a B\'ezier curve is well-defined and has geometric significance---there is a
 sequence of weights under which the limiting position of the curve is the control polygon.
 For a B\'ezier surface patch, there are many possible polyhedral control structures, and none 
 are canonical.
 We propose a not necessarily polyhedral control structure for surface patches, regular
 control surfaces, which are certain $C^0$ spline surfaces.
 While not unique, regular control
 surfaces are exactly the possible limiting positions of a B\'ezier patch when the weights are
 allowed to vary. 
\end{abstract}

\maketitle



%
\section{Introduction}

In geometric modeling of curves and surfaces, the overall shape of an individual patch is
intuitively governed by the placement of control points, and a patch may be finely tuned
by altering the weights of the basis functions---large weights pull the patch
towards the corresponding control points.
The control points also have a global meaning as the patch lies within the convex hull of the
control points, for any choice of weights.

This convex hull is often indicated by drawing some edges between the control
points.
The rational bicubic tensor product patches in Figure~\ref{F:bicubic} have the same weights but
different control points, 
and the same $3\times 3$ quadrilateral grid of edges drawn between the control
points.
\begin{figure}[htb]
  \[
   \raisebox{-60pt}{\includegraphics[height=120pt]{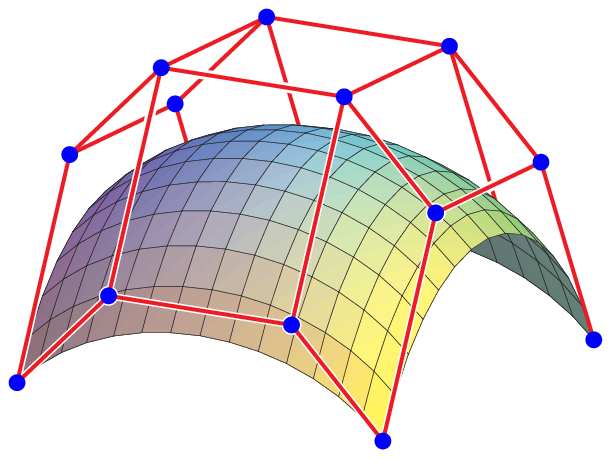}\qquad\qquad
   \includegraphics[height=120pt]{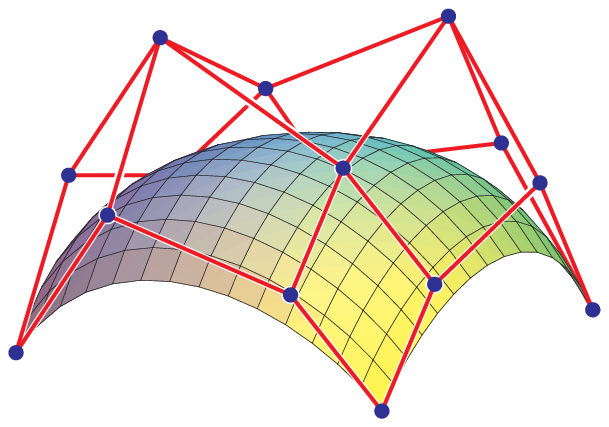} }
  \]
 \caption{Two rational bicubic patches.}
 \label{F:bicubic}
\end{figure}
Unlike the control points or their convex hulls, there is no canonical choice of these edges.
We paraphrase a question posed to us by Carl de Boor and Ron Goldman:
What is the significance for modeling of such control structures (control points plus edges)?

We provide an answer to this question.
These control structures, the triangles, quadrilaterals, and other shapes
implied by these edges, encode limiting positions of the patch when the weights assume extreme
values.
By Theorems~\ref{Th:limit} and~\ref{Th:convex}, the only possible limiting positions of a
patch are the control structures arising from \demph{regular decompositions} (see
Section~\ref{S:regular}) of the points indexing its basis functions and control points,
and any such regular control structure is the limiting position of some sequence of
patches. 

Here are rational bicubic patches with the control points of Figure~\ref{F:bicubic} and 
extreme weights.
\[
  \includegraphics[height=120pt]{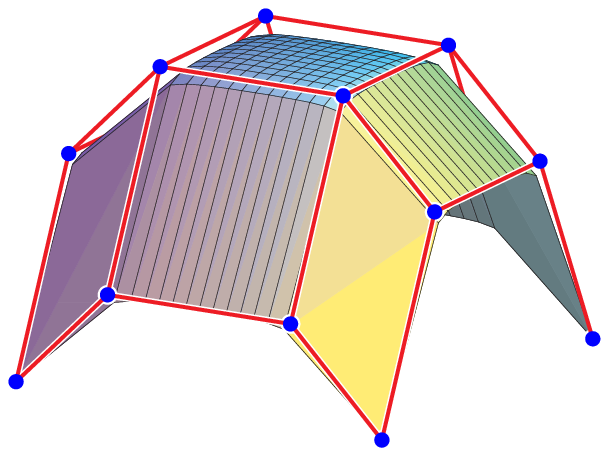}\qquad\qquad
  \includegraphics[height=120pt]{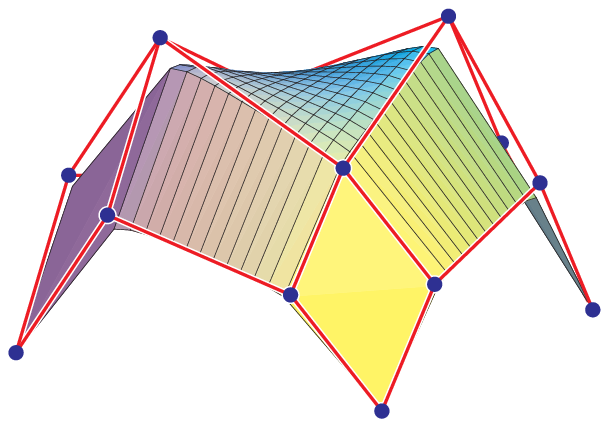}
\]
Each is very close to a composite of nine bilinear tensor product
patches---corresponding to the nine quadrilaterals in their control structures.
The control points of each limiting bilinear patch are the corners of the corresponding
quadrilateral.
These are all planar patches on the left, while 
only the corner quadrilaterals are planar on the right.

The control structure in these examples is a regular decomposition of the
$3\times 3$ grid underlying a bicubic patch.
It is regular as it is induced from the upper convex hull of the graph of a function
on the 16 grid points.
Such a function could be 0 at the four corners, 2
at the four interior points and 1 at the remaining eight edge points.
We show this decomposition on the left below together with an irregular decomposition on the right.
 \begin{equation}\label{Eq:irregular}
     \raisebox{-35pt}{\includegraphics{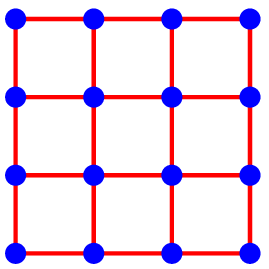}\qquad\qquad
         \includegraphics{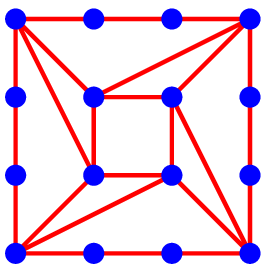}}
 \end{equation}
(If the second decomposition were the upper convex hull of the graph of a function on the grid 
points, and we assume---as we may---that the central square is flat, then the value of the function
at a vertex is lower than the values at a clockwise neighbor, which is impossible outside of Escher
woodcuts.) 

Such control structures and limiting patches were considered in~\cite{Craciun},
but were restricted to triangulations---this restriced the scope of the results.
Our results hold in complete generality and like those of~\cite{Craciun}, rely upon a
construction in computational algebraic geometry called a toric
degeneration~\cite[Ch.~8.3.1]{GKZ}.

While our primary interest is to explain the meaning of control nets for the classical rational 
tensor product patches and rational B\'ezier triangles, we work in the generality of Krasauskas'
toric B\'ezier patches~\cite{Kr02,Kr06}.
The reason for this is simple---any polygon may arise in a regular decomposition of the
points underlying a classical patch.
On the left below is a regular decomposition of the points in the $2\times 2$ grid underlying a
biquadratic patch and on the right is a degenerate patch, which consists of four triangles and
Krasauskas's double pillow.
The pillow corresponds to the central quadrilateral in the $2\times 2$ grid, with the `free' 
internal control point corresponding to the center point of the grid.
 \begin{equation}\label{Eq:pillow}
  \raisebox{-30pt}{\includegraphics{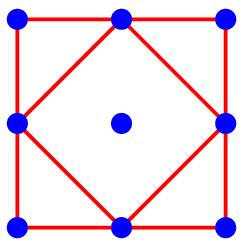}}\qquad\qquad
  \raisebox{-54pt}{\includegraphics[height=120pt]{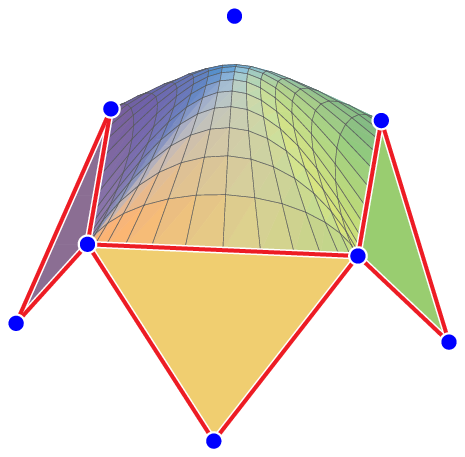}}
 \end{equation}

While our primary interest is in surface patches, our definitions and arguments make sense in
any dimension.
Our discussion will concern surface patches but the proofs in the appendix will be given for patches
of any dimension.

We remark that we do not address the variation diminishing property, which is another fundamental
global aspect of the control polygon of a rational B\'ezier curve.
This states that the total variation of a B\'ezier curve is bounded by the total variation in the
control polygon.
Generalizing this to surfaces is important and interesting, but we currently do not know how to
formulate variation diminishing for general toric patches.

This note is organized as follows. 
We first recall basics of rational B\'ezier triangles and rational tensor product patches and their control nets.  
Next, we present Krasauskas' toric B\'ezier patches and
introduce the crucial notion of a regular polyhedral decomposition. 
In the last section we define the main object in this paper, a regular
control surface, which is a union of toric B\'ezier patches governed
by a regular decomposition. 
We also state our two main theorems, Theorem~\ref{Th:limit}, that regular control surfaces
are limits of toric B\'ezier patches, and Theorem~\ref{Th:convex}, that if a patch is
sufficiently close to a control surface, then that control surface must be regular. 
Proofs appear in the appendix, where we
work in the generality of toric patches in arbitrary dimension. Our
main tools are results of~\cite{KSZ1,KSZ2} which identify all
possible toric degenerations of a projective toric variety.

%
\section{B\'ezier patches and control nets}

We define rational B\'ezier curves and surfaces and tensor product patches in a form that is
convenient for our discussion, and then describe their control nets.
Our presentation differs from the standard formulation~\cite{Fa97} in that
the degree is encoded by the domain.
This linear reparametrization does not affect the resulting curve or surface.
Write $\R_\geq$ for the nonnegative real numbers and $\R_>$ for the positive real
numbers.

Let $d$ be a positive integer.
For each $i=0,\ldots,d$ define the \demph{Bernstein polynomial $\beta_{i;d}(x)$},
\[
     \beta_{i;d}(x)\ :=\ x^i(d-x)^{d-i}\,.
\]
(Substituting $x=dy$ and multiplying by $\tbinom{d}{i}d^{-d}$ for normalization, this
becomes the usual Bernstein polynomial.
This nonstandard presentation, omitting the binomial coefficients, stresses the separate role of
functions and weights.) 
Given weights $w_0,\ldots,w_d\in\R_>$ and control points
$\bb_0,\ldots,\bb_d\in\R^n$ ($n=2,3$), we have the
parameterized \demph{rational B\'ezier curve}
\[
  F(x)\ :=\ \frac{\sum_{i=0}^d w_i\bb_i\beta_{i;d}(x)}%
                 {\sum_{i=0}^d w_i\beta_{i;d}(x)}\ \colon\ [0,d]\ \longrightarrow\ \R^n\,.
\]
Our domain is $[0,d]$ rather than $[0,1]$, for this is the natural
convention for toric patches.

The \demph{control polygon} of the curve is the union of
segments $\overline{\bb_0,\bb_1},\dotsc,\overline{\bb_{d-1},\bb_d}$.
Here are two rational cubic B\'ezier planar curves with their control polygons.
 \begin{equation}\label{Eq:rationalCubicCurves}
  \raisebox{-46pt}{%
   \begin{picture}(130,93)
    \put(10,0){\includegraphics[height=90pt]{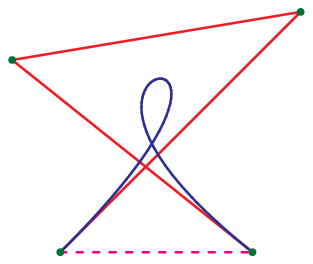}}
    \put(-3,66){$\bb_2$}   \put(118,85){$\bb_1$}
    \put(13, 0){$\bb_0$}   \put(102, 0){$\bb_3$}
   \end{picture}
   \qquad \qquad
   \begin{picture}(130,93)
    \put(11,0){\includegraphics[height=90pt]{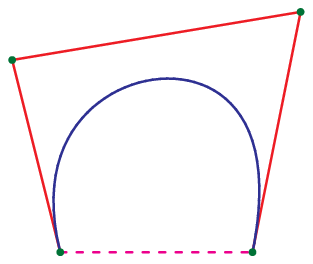}}
    \put(-1,66){$\bb_1$}   \put(119,85){$\bb_2$}
    \put(14, 0){$\bb_0$}   \put(103, 0){$\bb_3$}
   \end{picture}%
  }
 \end{equation}

There are two standard ways to extend this to surfaces. 
The most straightforward
gives rational tensor product patches. Let $c,d$ be positive integers and for
each $i=0,\dotsc,c$ and $j=0,\dotsc,d$ let $w_{(i,j)}\in\R_>$ and
$\bb_{(i,j)}\in\R^3$ be a weight and a control point. 
The associated rational tensor product patch of bidegree $(c,d)$ is the image of the map
$[0,c]\times[0,d]\to\R^3$,
\[
  F(x,y)\ :=\ 
    \frac{\sum_{i=0}^c\sum_{j=0}^d w_{(i,j)}\bb_{(i,j)}\beta_{i;c}(x)\beta_{j;d}(y)}%
         {\sum_{i=0}^c\sum_{j=0}^d w_{(i,j)} \beta_{i;c}(x)\beta_{j;d}(y)}\ .
\]
Triangular B\'ezier patches are another extension.
Set
\[
   \Blue{d}\btri\ :=\ \{(x,y)\in\R^2\mid 0\leq x,y\mbox{ and }x+y\leq d\}
\]
and set $\calA:=d\tri\cap\Z^2$, the points with integer coordinates (lattice points) in
the triangle $d\tri$. 
For $(i,j)\in\calA$, we have the bivariate Bernstein polynomial
\[
   \Blue{\beta_{(i,j);d}(x,y)}\ :=\ x^iy^j(d-x-y)^{d-i-j}\,.
\]
Given weights $w=\{w_{(i,j)}\mid(i,j)\in\calA\}$ and control points
$\calB=\{\bb_{(i,j)}\mid(i,j)\in\calA\}$, the associated triangular 
rational B\'ezier patch is the image of the map $d\tri\to\R^3$,
\[
  F(x,y)\ :=:\ 
    \frac{\sum_{(i,j)\in\calA}w_{(i,j)}\bb_{(i,j)}\beta_{(i,j);d}(x,y)}%
         {\sum_{(i,j)\in\calA}w_{(i,j)}\beta_{(i,j);d}(x,y)}\ .
\]

The control points of a B\'ezier curve are naturally connected in
sequence to give the control polygon, which is a piecewise linear
caricature of the curve. 
For a surface patch there are however many ways to interpolate the control points by edges to form a
control net. 
There also may not be a way to fill in these edges with polygons to form a polytope. 
Even when this is possible, the significance of this structure for the shape of the patch is not
evident, except in special cases. 
For example, if the control points are the graph of a convex function over the lattice points, then the
patch is convex~\cite{Da91,DM88}. 
For such control points, the obvious net consists of the upward-pointing facets of the convex hull
of the graph.  
This is the net of the bicubic patches in Figure~\ref{F:bicubic}.

%
\section{Toric patches and toric varieties}
\label{S:toric}

Krasauskas's toric patches~\cite{Kr02} are a natural extension of rational B\'ezier triangles and
rational tensor product patches to arbitrary polygons whose vertices have integer
coordinates, called \demph{lattice polygons}.
They are based on toric varieties~\cite{CLS,Fu93} from algebraic geometry which get their name as
they are natural compactifications of algebraic tori $(\C^*)^n$, where
$\dcol{\C^*}:=\C\setminus\{0\}$.
They are naturally associated to lattice polygons (and
in higher dimensions, lattice polytopes), and the positive real
part~\cite[Ch.~4]{Fu93}~\cite{So03} of a toric variety is canonically identified with the
corresponding polygon/polytope.

Toric patches  begin with a finite set $\calA\subset\Z^2$ of (integer) lattice points.
The convex hull of $\calA$ is the set of all convex combinations
\[
   \sum_{\ba\in\calA} p_\ba \ba
    \qquad\mbox{where}\qquad
   p_\ba\geq 0
    \qquad\mbox{and}\qquad
   1\ =\ \sum_{\ba\in\calA} p_\ba\,
\]
of points of $\calA$, which is a lattice polygon and is written \dcol{$\Delta_\calA$}.
To each edge $e$ of $\Delta_\calA$, there is a valid inequality
$h_e(x,y)\geq0$ on $\Delta_\calA$, where $h_e(x,y)$ is a linear polynomial with integer
coefficients having no common integer factors that vanishes on the edge $e$.
For example, if $\calA=d\tri\cap\Z^2$, then
$\Delta_\calA=d\tri$ and the inequalities are
\[
   x\geq0,\quad y\geq0,\quad\mbox{and}\quad d{-}x{-}y\geq 0.
\]
For another example, the central quadrilateral of~\eqref{Eq:pillow} has inequalities
\[
  x{+}y{-}1\geq0,\,\
  1{+}x{-}y\geq0,\,\
  3{-}x{-}y\geq0,\,\ \mbox{and}\,\
  1{+}y{-}x\geq0.
\]
Let $E$ be the set of edges of the polygon $\Delta_\calA$.
To each lattice point $\ba\in\calA$, define the
\demph{toric basis function} $\beta_{\ba,\calA}\colon\Delta_\calA\to\R$ to be
\[
  \beta_{\ba,\calA}(x,y)\ :=\ \prod_{e\in E} h_e(x,y)^{h_e(\ba)}.
\]
This is strictly positive in the interior of $\Delta_\calA$.
If $\ba$ lies on an edge $e$ of $\Delta_\calA$, then $\beta_{\ba,\calA}$ is strictly
positive on the relative interior of $e$, and if $\ba$ is a vertex, then
$\beta_{\ba,\calA}(\ba)>0$.
In particular the toric basis functions have no common zeroes in $\Delta_\calA$.

Observe that the toric basis functions for
$\calA=[0,c]\times[0,d]\cap\Z^2$ and $\calA=d\tri\!\cap\Z^2$
are equal to the Bernstein polynomials $\beta_{i;c}(x)\beta_{j;d}(y)$ and
$\beta_{(i,j);d}(x,y)$ underlying the tensor product and triangular B\'ezier patches.

Toric patch also require \demph{weights} and control points.
Let $\#\calA$ be the number of points in $\calA$.
Let $\R_>^\calA$ be $\R_>^{\#\calA}$ with coordinates
$(z_\ba\in\R_>\mid\ba\in\calA)$ indexed by elements of $\calA$. A
toric B\'ezier patch of shape $\calA$ is given by a collection of positive weights
$w=(w_\ba\mid \ba\in\calA)\in\R_>^\calA$ and control points
$\calB=\{\bb_\ba\mid\ba\in\calA\}\subset\R^3$.
These define a map $\Delta_\calA\to\R^3$,
 \begin{equation}\label{Eq:TBP}
   F_{\calA,w,\calB}(x,y)\ :=\ 
        \frac{\sum_{\ba\in\calA} w_\ba \bb_\ba \beta_{\ba,\calA}(x,y)}%
               {\sum_{\ba\in\calA} w_\ba \beta_{\ba,\calA}(x,y)}\ .
 \end{equation}
Since the toric basis functions are nonnegative on $\Delta_\calA$ and have no common
zeroes, this denominator is strictly positive on $\Delta_\calA$.
Write \demph{$Y_{\calA,w,\calB}$} for the image of $\Delta_\calA$ under the map $F$, which
is a \demph{toric B\'ezier patch} of shape $\calA$.

The map $F\colon\Delta_\calA\to\R^3$ admits a factorization
 \begin{equation}\label{Eq:factorization}
   F\ \colon\ \Delta_\calA\ \xrightarrow{\ \beta_{\calA}\ }\ 
    \simplex^{\calA}\ \xrightarrow{\ w\cdot \ }\ 
    \simplex^{\calA}\ \xrightarrow{\ \pi_\calB\ }\ \R^3\,,
 \end{equation}
where $\simplex^{\calA}\subset\R^\calA$ is the standard simplex of dimension $\#\calA-1$, the
map $\beta_{\calA}$ is given by the toric basis functions $\beta_{\ba,\calA}$, the map $w\cdot$ is
(essentially) coordinatewise multiplication by the weights $w$, and the map $\pi_\calB$ is
a projection given by the control points $\calB$.
The purpose of this factorization is to clarify the role of the weights in a toric patch
by isolating their effect.
The image $\beta_{\calA}(\Delta_\calA)\subset \simplex^{\calA}$ is a standard toric
variety $X_\calA$.
Acting on this by the map $w\cdot$ gives a translated toric variety
$X_{\calA,w}$, which we call a \demph{lift} of the patch $Y_{\calA,w,\calB}$ as its image
under the projection $\pi_\calB$ is $Y_{\calA,w,\calB}$. 
We use results on the limiting position of the translates
$X_{\calA,w}$ as the weights are allowed to vary, which are called toric degenerations.

We make this precise.
Let $\R_\geq^\calA$ be $\R_\geq^{\#\calA}$ with coordinates
$(z_\ba\in\R_\geq\mid\ba\in\calA)$ indexed by elements of $\calA$.
The standard simplex
 \[
  \bsimplex^{\calA}\ :=\
    \{ z\in\R_\geq^\calA\mid {\textstyle \sum_{\ba\in\calA}z_\ba=1}\}\ 
 \]
is the convex hull of the standard basis in $\R^\calA$.
It has homogeneous coordinates,
 \begin{equation}\label{Eq:Homog_Coords}
  [z_\ba\mid\ba\in\calA]\ :=\
    \frac{1}{\sum_{\ba\in\calA}z_\ba}(z_\ba\mid\ba\in\calA)\,.
 \end{equation}
Geometrically, $[z_\ba\mid \ba\in\calA]\in\simplex^{\calA}$ is the unique point where the ray
$\R_>\cdot(z_\ba\mid \ba\in\calA)$ meets the simplex $\simplex^{\calA}$.

Let $\beta_\calA\colon\Delta_\calA\to\simplex^{\calA}$ be the map
$\beta_\calA(x,y)=[\beta_{\ba,\calA}(x,y)\mid \ba\in\calA]$
Given weights $w\in\R^\calA_>$, we have the map
$w\cdot \colon\simplex^{\calA}\to\simplex^{\calA}$ defined by
\[
   w\cdot [z_\ba\mid\ba\in\calA]\ =\ [w_\ba z_\ba\mid \ba\in\calA]\,.
\]
Lastly, given control points $\calB$, define the linear map
$\pi_\calB\colon\R^\calA\to\R^3$ by
\[
  \pi_\calB(z)\ :=\ \sum_{\ba\in\calA} \bb_\ba z_\ba\,.
\]
The image of the simplex $\simplex^\calA$ under $\pi_\calB$ is
the convex hull of the control points $\calB$, and by these definitons,
the map $F$ in~\eqref{Eq:TBP} defining the toric B\'ezier patch is the
composition~\eqref{Eq:factorization}.

We call $Y_{\calA,w,\calB}$ a toric patch because the image 
$\beta_\calA(\Delta_\calA)$ is a toric variety, which we now explain.
Elements $\ba$ of $\Z^2$ are exponents of monomials,
\[
   \ba\ =\ (a,b)\ \longleftrightarrow\ x^ay^b\,,
\]
which we will write as $x^{\ba}$.
The points of $\calA$ define a map
$\varphi_{\calA}\colon\R_>^2\to\simplex^\calA$ by
\[
  \varphi_{\calA}(x,y)\ :=\ [x^\ba\mid \ba\in\calA]\,.
\]
The closure in $\simplex^\calA$ of the image of $\varphi_{\calA}$
is the  \demph{toric variety $X_{\calA}$}.
We have the following result of Krasauskas~\cite{Kr02}.

\begin{proposition}
  The image of $\Delta_\calA$ under the map $\beta_{\calA}$ is the toric variety
  $X_{\calA}$.
\end{proposition}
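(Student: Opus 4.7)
\medskip
\noindent\textbf{Proof plan.}
My plan is to prove the two inclusions $\beta_\calA(\Delta_\calA)\subseteq X_\calA$ and $X_\calA\subseteq\beta_\calA(\Delta_\calA)$ separately. The whole argument hinges on a single observation: on the interior of $\Delta_\calA$, where every $h_e(x,y)$ is strictly positive, the map $\beta_\calA$ coincides with $\varphi_\calA$ after a monomial reparametrization of the positive torus $\R_>^2$. One inclusion then follows from continuity and compactness; the other will reduce to surjectivity of this reparametrization.

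To exhibit the factorization, I would write each edge polynomial as $h_e(x,y)=\alpha_e x+\beta_e y+\gamma_e$, fix any reference point $\ba_0=(a_{0,1},a_{0,2})\in\calA$, and use that $h_e(\ba)-h_e(\ba_0)=\alpha_e(a_1-a_{0,1})+\beta_e(a_2-a_{0,2})$ is affine-linear in $\ba$. Regrouping the factors in the definition of $\beta_{\ba,\calA}$ yields
\[
  \frac{\beta_{\ba,\calA}(x,y)}{\beta_{\ba_0,\calA}(x,y)}\ =\ \prod_{e\in E}h_e(x,y)^{h_e(\ba)-h_e(\ba_0)}\ =\ u(x,y)^{a_1-a_{0,1}}\,v(x,y)^{a_2-a_{0,2}}\,,
\]
where $u(x,y):=\prod_{e\in E}h_e(x,y)^{\alpha_e}$ and $v(x,y):=\prod_{e\in E}h_e(x,y)^{\beta_e}$ are positive on $\mathrm{int}\,\Delta_\calA$. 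Because the homogeneous coordinates on $\simplex^\calA$ are invariant under uniform positive rescaling, this identity reads $\beta_\calA(x,y)=\varphi_\calA(u(x,y),v(x,y))$. Hence $\beta_\calA(\mathrm{int}\,\Delta_\calA)\subseteq\varphi_\calA(\R_>^2)\subseteq X_\calA$, and continuity of $\beta_\calA$, compactness of $\Delta_\calA$, and closedness of $X_\calA$ in $\simplex^\calA$ upgrade this to the forward inclusion $\beta_\calA(\Delta_\calA)\subseteq X_\calA$.

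For the reverse inclusion, the same factorization shows that it suffices to prove the monomial map $\Phi\colon\mathrm{int}\,\Delta_\calA\to\R_>^2$, $(x,y)\mapsto(u(x,y),v(x,y))$, is \emph{surjective}: given this, $\varphi_\calA(\R_>^2)\subseteq\beta_\calA(\mathrm{int}\,\Delta_\calA)$, and taking closures (using that $\beta_\calA(\Delta_\calA)$ is compact, hence closed) yields $X_\calA=\overline{\varphi_\calA(\R_>^2)}\subseteq\beta_\calA(\Delta_\calA)$. This surjectivity is the main obstacle. I would derive it from the classical identification of the positive real part $X_{\calA,>}$ of a projective toric variety with the interior of its moment polytope (Fulton~\cite{Fu93}, Sottile~\cite{So03}); equivalently, a self-contained check verifies $\mu\circ\beta_\calA=\mathrm{id}$ on $\mathrm{int}\,\Delta_\calA$, where $\mu\colon\simplex^\calA\to\Delta_\calA$ is the moment map $\mu([z_\ba])=\sum_\ba z_\ba\ba/\sum_\ba z_\ba$. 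That identity is a short differentiation-of-the-multinomial calculation, and it forces $\beta_\calA\colon\mathrm{int}\,\Delta_\calA\to X_{\calA,>}$ to be injective with continuous left inverse; an openness-plus-properness argument then shows the image is all of $X_{\calA,>}$, which implies $\Phi$ is onto $\R_>^2$.
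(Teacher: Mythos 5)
The paper itself offers no proof of this proposition; it attributes the result to Krasauskas~\cite{Kr02} and moves on. So I am assessing your argument on its own terms rather than against the paper.

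Your factorization $\beta_\calA=\varphi_\calA\circ\Phi$ on $\mathrm{int}\,\Delta_\calA$ is correct, and it is the right key step. The forward inclusion $\beta_\calA(\Delta_\calA)\subseteq X_\calA$ then follows exactly as you say from density, continuity, and the closedness of $X_\calA$. The problem is the reverse inclusion: the identity $\mu\circ\beta_\calA=\mathrm{id}$ that you invoke is \emph{false}. That identity is precisely the property known as \emph{linear precision}, and it does not hold for the toric basis functions as defined here (the multinomial coefficients are deliberately omitted), nor does it hold for general toric patches even with those coefficients restored. A one--dimensional example already refutes it: for $\calA=\{0,1,2\}$ one has $\beta_{i;2}(x)=x^i(2-x)^{2-i}$, so
\[
  \mu\bigl(\beta_\calA(x)\bigr)\;=\;\frac{0\cdot(2-x)^2+1\cdot x(2-x)+2\cdot x^2}{(2-x)^2+x(2-x)+x^2}\;=\;\frac{x(2+x)}{4-2x+x^2}\,,
\]
which at $x=\tfrac12$ equals $\tfrac{5}{13}\ne\tfrac12$. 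What the references~\cite{Fu93,So03} actually give is that the moment map $\mu$ restricts to a homeomorphism from the nonnegative part of $X_\calA$ onto $\Delta_\calA$; they do not say that this homeomorphism inverts $\beta_\calA$.

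Consequently, your ``continuous left inverse'' evaporates and the surjectivity of $\Phi$ (equivalently, that $\beta_\calA$ covers the dense torus orbit of $X_\calA$), which is the actual content of the proposition, is left unproved. It can be repaired, for instance: (i) keep the moment-map homeomorphism and instead show the composite $\mu\circ\beta_\calA\colon\mathrm{int}\,\Delta_\calA\to\mathrm{int}\,\Delta_\calA$ is proper and open, hence surjective by connectedness; or (ii) argue directly with $\Phi$: the Jacobian of $(\log u,\log v)$ is the Gram matrix of the vectors $(\alpha_e/\sqrt{h_e})_e$ and $(\beta_e/\sqrt{h_e})_e$, which is positive definite because the edge normals $(\alpha_e,\beta_e)$ span $\R^2$ when $\Delta_\calA$ is two-dimensional, so $\Phi$ is an open map; pair this with a properness argument at $\partial\Delta_\calA$ (where some $h_e\to 0$) to conclude surjectivity onto $\R_>^2$. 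Either route completes the proof; the moment-map identity, however, must be abandoned.
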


Toric patches share with rational B\'ezier patches the following recursive
structure. If $\ba$ is a vertex of $\Delta_\calA$, then
$\bb_\ba=F_{\calA,w,\calB}(\ba)$ is a point in the patch. 
If $e$ is the edge between two vertices of $\Delta_\calA$, then the restriction 
$F_{\calA,w,\calB}|_{e}$ of $F_{\calA,w,\calB}$ to $e$ is the 1-dimensional toric patch given by 
the points of $\calA$ lying on $e$ and the corresponding weights, which is a rational B\'ezier
curve.
For example, the edges of the rational bicubic
patches in Figure~\ref{F:bicubic} are all rational cubic B\'ezier curves.

%
\section{Regular polyhedral decompositions}
\label{S:regular}

We recall the definitions of regular (or coherent) polyhedral subdivisions from geometric
combinatorics, which were introduced in~\cite[\S~7.2]{GKZ}.
Because subdivision has a different meaning in modeling, we instead use the term \demph{decomposition}.
Let $\calA\subset\R^2$ be a finite set and suppose that
$\lambda\colon\calA\to\R$ is a function.
We use $\lambda$ to lift the points of $\calA$ into $\R^3$.
Let $P_\lambda$ be the convex hull of the lifted points,
\[
  P_\lambda=\conv\{(\ba,\lambda(\ba))\mid\ba\in\calA\}\subset\R^3.
\]
Each face of $P_\lambda$ has an outward pointing normal vector, and
its \demph{upper facets} are those whose normal has positive last
coordinate. 
If we project these upper facets back to $\R^2$, they cover the polygon $\Delta_\calA$ and are the
facets of the \demph{regular polyhedral decomposition $\calT_\lambda$} of $\Delta_\calA$ induced by
$\lambda$.  
(Lower facets also induce a regular polyhedral subdivision, which equals $\calT_{-\lambda}$, and
so it is no loss of generality to work with upper facets.)

The edges and vertices of  $\calT_\lambda$ are the images
of the edges and vertices lying on upper facets. 
Here are the upper facets and regular polyhedral decompositions given by two different lifting functions
for the points $\calA$ underlying a biquadratic tensor product patch.
\[
  \includegraphics[height=120pt]{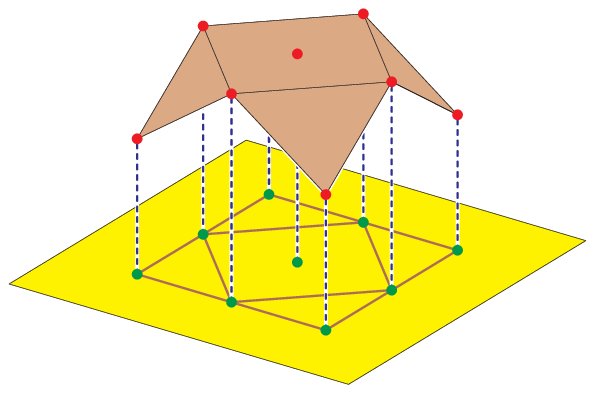}\qquad\quad
  \includegraphics[height=120pt]{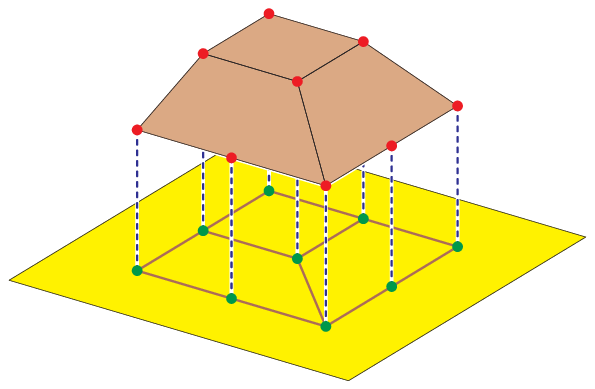}
\]
More generally, a \demph{polyhedral decomposition} of $\Delta_\calA$ is a
collection \dcol{$\calT$} of polygons, line segments, and points of $\calA$,
whose union is $\Delta_\calA$, where any edge, vertex, or endpoint of a segment
also lies in $\calT$, and any two elements of $\calT$ are either disjoint or their
intersection is an element of $\calT$.
A decomposition $\calT$ is \demph{regular} if it is induced from a lifting function.

A \demph{decomposition $\calS$} of the configuration $\calA$ of points is a collection
$\calS$ of subsets of $\calA$ called \demph{faces}. 
The convex hulls
of these faces are required to be the polygons, line segments, and vertices of a
polyhedral decomposition \demph{$\calT(\calS)$} of $\Delta_\calA$. In particular, the
intersection of any face with the convex hull $\Delta_{\calF}$ of
another face $\calF$ of $\calS$ is either empty, a vertex of $\Delta_{\calF}$,
or the points of $\calF$ lying in some edge of $\Delta_{\calF}$.
A face $\calF$ is a \demph{facet}, \demph{edge}, or \demph{vertex} of $\calS$ as its
convex hull $\Delta_{\calF}$ is a polygon, line segment, or vertex.
The decomposition $\calS$ is \demph{regular} if the polyhedral decomposition $\calT(\calS)$ is
regular.

Below are two different lifting functions that induce the same
regular polyhedral decomposition of the $2\times 2$ square underlying a biquadratic patch, but
different regular decompositions of $\calA$.
\[
  \includegraphics[height=120pt]{figures/Upperhull.1.eps}\qquad\quad
  \includegraphics[height=120pt]{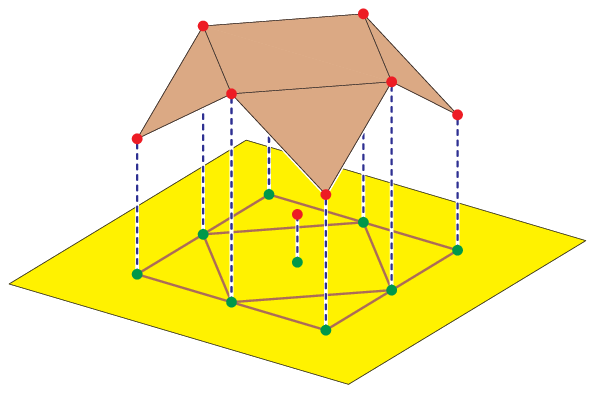}
\]
The center point of $\calA$ does not lie in any face of the
decomposition on the right as its lift does not lie on any upper facet.

Here is a one-dimensional example.
Let $\lambda$ take the values $\{0,1,2,0\}$ on the points $\{0,1,2,3\}$ underlying rational cubic
  B\'ezier curves.
This induces a regular decomposition of $\{0,1,2,3\}$ with facets
 \begin{equation}\label{Eq:cubic_decomposition}
   \{0,1,2\}\qquad\mbox{and}\qquad \{2,3\}\,.
 \end{equation}
%

%
\section{Regular control surfaces}
\label{S:control_surface}

Regular control surfaces are possible limiting positions of patches.
We first illustrate these notions on a rational cubic curve in the plane.
The curves of~\eqref{Eq:rationalCubicCurves} have weights $(1,4,4,1)$ on the Bernstein polynomials
$\beta_{0;3},\beta_{1;3},\beta_{2;3},\beta_{3;3}$, respectively. 
The lifting function inducing the decomposition~\eqref{Eq:cubic_decomposition} gives a family of
weights $(1,4t,4t^2,1)$ as $t\in\R_>$ varies.
When $t=5$, and we use the control points of~\eqref{Eq:rationalCubicCurves}, we get the following
curves. 
 \begin{equation}\label{Eq:rationalCubicCurves_degen}
  \raisebox{-46pt}{%
  \begin{picture}(130,93)
    \put(10, 0){\includegraphics[height=90pt]{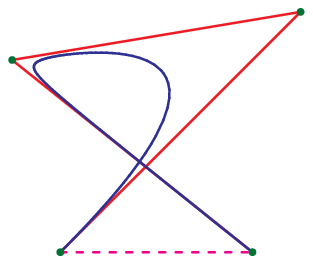}}
    \put(-3,66){$\bb_2$}   \put(118,85){$\bb_1$}
    \put(13, 0){$\bb_0$}   \put(102, 0){$\bb_3$}
   \end{picture}
   \qquad \qquad
   \begin{picture}(130,93)
    \put(11, 0){\includegraphics[height=90pt]{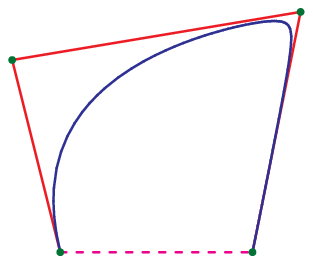}}
    \put(-1,66){$\bb_1$}   \put(119,85){$\bb_2$}
    \put(14, 0){$\bb_0$}   \put(103, 0){$\bb_3$}
  \end{picture}%
  }
 \end{equation}
To consider the limit as $t\to\infty$, write the Bernstein polynomials in
homogeneous form as $\beta_{i;3}:=u^iv^{3-i}$ for $i=0,\dotsc,3$, for then the 
cubic curve is the image of points $(u,v)\in(\R_>)^2$.

Limiting positions are given by restrictions to the facets of the
decomposition~\eqref{Eq:cubic_decomposition}. 
Multiplying the $\beta_{i;3}$ by the weights and restricting to each facet,
we get basis functions 
\[
  \{ v^3, 4tv^2u, 4t^2vu^2\}\,,
   \qquad\mbox{and}\qquad
  \{4t^2v u^2, u^3\}\,.
\]
These give rational B\'ezier curves
\[
  \frac{v^3\bb_0+4tv^2u\bb_1+4t^2vu^2\bb_2}{v^3+4tv^2u+4t^2vu^2}
   \qquad\mbox{and}\qquad
  \frac{4t^2vu^2\bb_2+u^3\bb_3}{4t^2vu^2+u^3}\ .
\]
Dividing out the common factor of $v$ from the first and replacing $tu$ by $u$, and similarly
dividing out $u^2$  from the second and replacing $vt^2$ by $v$, we get
\[
  \frac{v^2\bb_0+4vu\bb_1+4u^2\bb_2}{v^2+4vu+4u^2}
   \qquad\mbox{and}\qquad
  \frac{4v\bb_2+u\bb_3}{4v+u}\ ,
\]
which are rational quadratic and linear B\'ezier curves.
With the control points of~\eqref{Eq:rationalCubicCurves_degen} these are,
 \begin{equation}\label{Eq:rationalCubicCurves_CS}
  \raisebox{-46pt}{%
  \begin{picture}(130,93)
    \put(10, 0){\includegraphics[height=90pt]{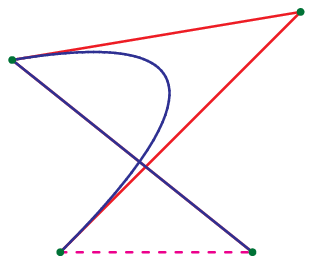}}
    \put(-3,66){$\bb_2$}   \put(118,85){$\bb_1$}
    \put(13, 0){$\bb_0$}   \put(102, 0){$\bb_3$}
   \end{picture}
   \qquad \qquad
   \begin{picture}(130,93)
    \put(11, 0){\includegraphics[height=90pt]{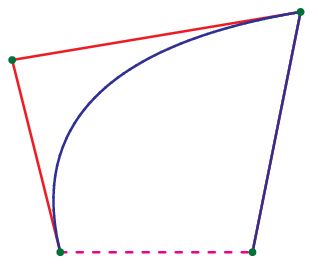}}
    \put(-1,66){$\bb_1$}   \put(119,85){$\bb_2$}
    \put(14, 0){$\bb_0$}   \put(103, 0){$\bb_3$}
   \end{picture}%
  }
 \end{equation}
These are regular control surfaces of the corresponding curves induced by the
decomposition~\eqref{Eq:cubic_decomposition}.
We explain this in general.\smallskip

Let $\calA\subset\Z^2$ be a finite set, $w\in\R^\calA_>$ be weights
and $\calB=\{\bb_\ba\mid \ba\in\calA\}$ be control points for a
toric patch $Y_{\calA,w,\calB}$ of shape $\calA$.

Suppose that we have a decomposition $\calS$ of $\calA$. 
We may use the weights $w$ and control points $\calB$
indexed by elements of a facet $\calF$ as weights and control points for a
toric patch of shape $\calF$, written \dcol{$Y_{\calF,w|_\calF,\calB|_\calF}$}.
In fact, this can be done for any face of $\calS$.
The union
\[
  \dcol{Y_{\calA,w,\calB}(\calS)}\ := \
  \bigcup_{\calF\in\calS} Y_{\calF,w|_\calF,\calB|_\calF}\,,
\]
of these patches is the \demph{control surface} induced by the decomposition $\calS$.
As the domain of a patch of shape $\calF$ is the convex hull $\Delta_{\calF}$ of $\calF$ and faces
of toric patches are again toric patches, 
the control surface of a patch induced by a decomposition is naturally a
$C^0$ spline surface.
A control surface $Y_{\calA,w,\calB}(\calS)$ is \demph{regular} if the decomposition $\calS$ is
regular.

Here are the control surfaces of the bicubic patches from Figure~\ref{F:bicubic}.
\[
  \includegraphics[height=120pt]{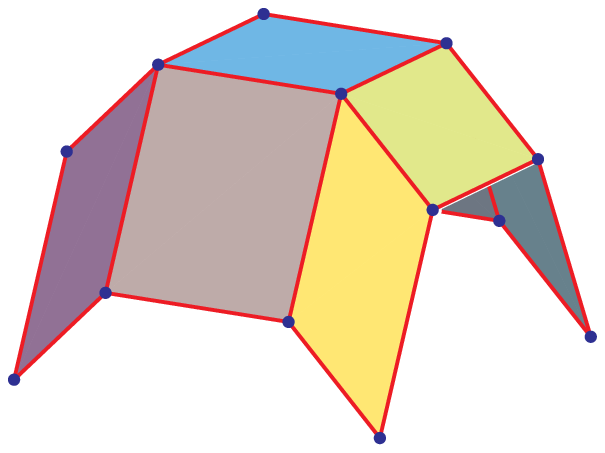}\qquad\qquad
  \includegraphics[height=120pt]{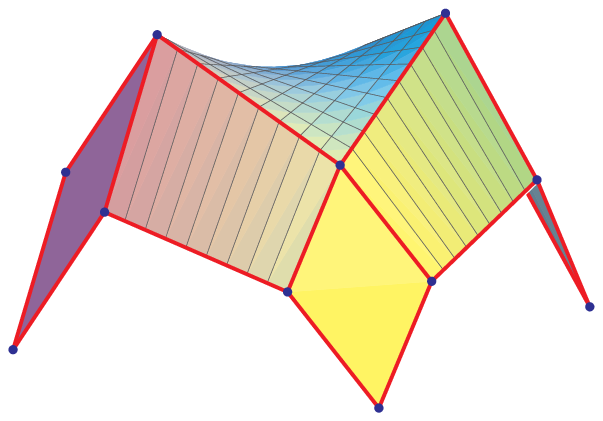}
\]
These control surfaces are regular as they are induced by the $3\times 3$ grid, which is a
regular decomposition.
Below is the irregular decomposition of the $3\times 3$ grid from~\eqref{Eq:irregular} and a
corresponding irregular control surface.
 \begin{equation}\label{Eq:nonConvex}
  \raisebox{-35pt}{%
   \begin{picture}(88,75)(-12,0)
     \put(0,0){\includegraphics{figures/3x3nonConvex.eps}}
     \put(32,34){$A$}   \put(43,12){$C$} 
      \put(-12,16){$B$} \put(-2,20){\vector(1,0){30}} 
     \put(0,-7){$o$} \put(24,-7){$p$}\put(47,-7){$q$} \put(71,-7){$r$}
   \end{picture}
   \qquad\qquad
   \raisebox{-15pt}{%
    \begin{picture}(162,118)(-2,-8)
      \put(0,0){\includegraphics[height=110pt]{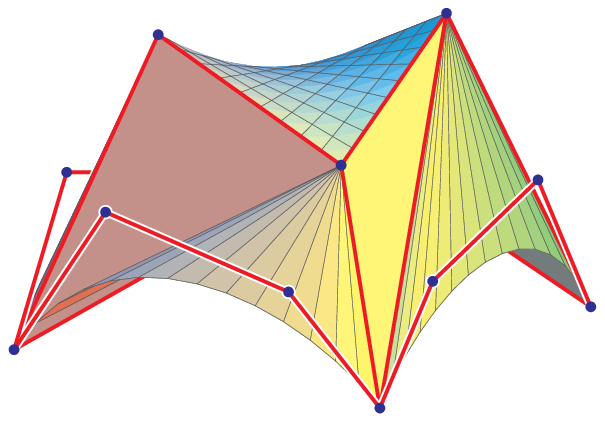}}
      \put(70,102){$A$}
      \put(0,90){$B$} \put(11,92){\vector(2,-1){40}}
      \put(44, -2){$C$} \put(52,9){\vector(1,2){10}}
     \put( 1,1){\vector(0,1){12}}     \put(27,1){\vector(0,1){45}}
     \put(75,1){\vector(0,1){29}}    
     \put(-2,-8){$o$} \put(24,-8){$p$}\put(72,-8){$q$} \put(99,-8){$r$}
    \end{picture}}}
 \end{equation}
The central quadrilateral $A$ in the decomposition corresponds to the bilinear 
patch at the top, the triangle $B$ in the decomposition corresponds to the indicated flat triangle,
and the triangle $C$ with points $o,p,q,r$ along one edge corresponds to the singular ruled cubic in
the surface.
The polygonal frame formed by the corresponding control points on the right is the control polygon
for this edge of $C$, which is a rational cubic B\'ezier curve.

We show that regular control surfaces are exactly
the possible limits of toric patches when the control points $\calB$
are fixed but the weights $w$ are allowed to vary. In particular,
the irregular control surface~\eqref{Eq:nonConvex} cannot be the
limit of toric B\'ezier patches.

Let $\lambda\colon\calA\to\R$ be a lifting function.
We use this and a given set of weights $w=\{w_\ba\in\R_>\mid\ba\in\calA\}$ to
get a set of weights which depends upon a parameter, 
$w_\lambda(t):=\{t^{\lambda(\ba)}w_\ba\mid \ba\in\calA\}$.
These weights are used to define a \demph{toric degeneration} of the patch,
\[
   F_{\calA,w,\calB,\lambda}(x;t)\ :=\ 
   \frac{\sum_{\ba\in\calA} t^{\lambda(\ba)}w_\ba \bb_\ba \beta_\ba(x)}%
        {\sum_{\ba\in\calA} t^{\lambda(\ba)}w_\ba\beta_\ba(x)}\,.
\]
Let $\calS_\lambda$ be the regular decomposition of $\calA$ induced by $\lambda$.
Our first main result is that the regular control surface $Y_{\calA,w,\calB}(\calS_\lambda)$ induced
by $\calS_\lambda$ is the limit of
the patches $Y_{\calA,w,\calB,\lambda}(t)$ parameterized by
$F_{\calA,w,\calB,\lambda}(x;t)$ as $t\to\infty$.

This limit is with respect to the Hausdorff distance between two subsets of
$\R^3$. 
Two subsets $X$ and $Y$ of $\R^3$ are \demph{within Hausdorff distance $\epsilon$} if for every
point $x$ of $X$ there is some point $y$ of $Y$ within a distance $\epsilon$ of $x$, and
vice-versa. 
With this notion of distance, we have the following result.

\begin{theorem}\label{Th:limit}
 ${\displaystyle \lim_{t\to\infty}Y_{\calA,w,\calB,\lambda}(t)=Y_{\calA,w,\calB}(\calS_\lambda)}$.
\end{theorem}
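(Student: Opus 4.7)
The plan is to exploit the factorization $F = \pi_\calB \circ (w\cdot) \circ \beta_\calA$ to reduce the theorem to a statement about Hausdorff limits inside the compact simplex $\simplex^\calA$. With the time-dependent weights $w_\lambda(t)$, the patch $Y_{\calA,w,\calB,\lambda}(t)$ is the image under the continuous linear projection $\pi_\calB$ of the translated toric variety $X_{\calA,w_\lambda(t)} := w_\lambda(t) \cdot X_\calA$. Since Hausdorff convergence of compact sets is preserved by continuous maps, it suffices to prove
$$\lim_{t\to\infty} X_{\calA,w_\lambda(t)} \;=\; \bigcup_{\calF} w|_\calF \cdot X_\calF$$
inside $\simplex^\calA$, where the union runs over facets $\calF$ of $\calS_\lambda$ and $X_\calF$ is the face subvariety of $X_\calA$ supported on coordinates indexed by $\ba \in \calF$. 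This is a direct instance of the toric degeneration theorem of~\cite{KSZ1,KSZ2}: the family $w_\lambda(t)$ is a one-parameter subgroup of the torus acting on $X_\calA$ with weight vector $\lambda$, and its limit as $t \to \infty$ decomposes $X_\calA$ into face subvarieties indexed exactly by the upper facets of $P_\lambda$, that is, by the facets of $\calS_\lambda$. Applying $\pi_\calB$ face by face identifies the right-hand side with $Y_{\calA,w,\calB}(\calS_\lambda)$.

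For the inclusion that every limit point lies in the control surface, I would take sequences $t_n \to \infty$ and $x_n \in \Delta_\calA$ with $x_n \to x_\infty \in \overline{\Delta_\calF}$ for some facet $\calF \in \calS_\lambda$. Let $\ell_\calF$ be the affine functional on $\R^2$ whose graph contains the upper facet of $P_\lambda$ over $\Delta_\calF$, so $\ell_\calF(\ba) = \lambda(\ba)$ for $\ba \in \calF$ and $\ell_\calF(\ba) > \lambda(\ba)$ otherwise. Rescaling the homogeneous coordinates $[t_n^{\lambda(\ba)} w_\ba \beta_{\ba,\calA}(x_n) : \ba \in \calA]$ by the common scalar $t_n^{-\ell_\calF(\bar\ba)}$ for any chosen $\bar\ba \in \calF$, the coordinates with $\ba \notin \calF$ acquire a vanishing factor $t_n^{\lambda(\ba) - \ell_\calF(\ba)} \to 0$, while those with $\ba \in \calF$ have bounded limits that are not all zero by continuity of the $\beta_{\ba,\calA}$ and the fact that the face basis functions have no common zero. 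The limit point therefore lies in $\pi_\calB(w|_\calF \cdot X_\calF) = Y_{\calF,w|_\calF,\calB|_\calF}$.

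For the reverse inclusion, given a facet $\calF$ and any $q = F_{\calF,w|_\calF,\calB|_\calF}(x_0)$ with $x_0 \in \Delta_\calF$, I would construct an approaching sequence $x_t \in \Delta_\calA$ with $F_{\calA,w,\calB,\lambda}(x_t;t) \to q$. The construction is dictated by the secondary fan of $\calA$: the facet $\calF$ of $\calS_\lambda$ corresponds to an open cone of one-parameter deformation directions, and one chooses $x_t$ to lie along a curve in $\Delta_\calA$ whose logarithmic direction sits in this cone, e.g.\ by prescribing $h_e(x_t) = t^{c_e} h_e(x_0)$ with constants $c_e \leq 0$ realizing (after Farkas) the affine functional $M - \ell_\calF$ as a nonnegative combination of the edge forms $h_e$. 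Along this curve, the terms in $F_{\calA,w,\calB,\lambda}(x_t;t)$ indexed by $\ba \in \calF$ dominate the others by a factor tending to infinity, and after dividing out the common $t$-power the ratio converges to $q$; varying the free parameters in the construction lets $q$ sweep out all of $Y_{\calF,w|_\calF,\calB|_\calF}$.

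The main technical obstacle is upgrading these essentially pointwise arguments into uniform Hausdorff convergence over all of $\Delta_\calA$, particularly at the walls between cells $\Delta_\calF$ where several support functionals $\ell_\calF$ coincide and the combinatorial type of the dominant face jumps. Here compactness of $\Delta_\calA$, continuity of the toric basis functions, and the secondary-fan structure associated to $\calA$ supply uniform estimates via a finite covering of $\Delta_\calA$ by the cells of $\calT_\lambda$, and this is exactly the content that~\cite{KSZ1,KSZ2} provide in the full generality of toric patches in arbitrary dimension.
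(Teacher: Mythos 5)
Your overall reduction — using the factorization $F = \pi_\calB \circ (w\cdot) \circ \beta_\calA$ and the preservation of Hausdorff limits under a continuous map on a compact space, so that it suffices to prove $\lim_{t\to\infty} X_{\calA,w_\lambda(t)} = X_{\calA,w}(\calS_\lambda)$ inside $\simplex^\calA$ — is exactly the paper's reduction. But two parts of what follows have real problems.

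First, the rescaling computation in your ``forward'' inclusion is wrong. If you multiply all the homogeneous coordinates $t_n^{\lambda(\ba)} w_\ba \beta_{\ba,\calA}(x_n)$ by the single scalar $t_n^{-\ell_\calF(\bar\ba)}$, the resulting factor on the $\ba$-coordinate is $t_n^{\lambda(\ba)-\ell_\calF(\bar\ba)}$, not $t_n^{\lambda(\ba)-\ell_\calF(\ba)}$; the latter would require a different scalar for each $\ba$, which is no longer a rescaling of homogeneous coordinates. Since $\ell_\calF$ is a nonconstant affine functional, even the coordinates indexed by $\ba\in\calF$ do not stay bounded away from $0$ and $\infty$ under your rescaling. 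A factor of the form $t_n^{\lambda(\ba)-\ell_\calF(\ba)}$ does appear in the paper's proof, but only after applying the coordinate-wise torus action $(t*x)_i = t^{v_i}x_i$ to the source point $x$, where $\bv$ is the normal to the lift of $\calF$; this produces an extra $t^{\bv\cdot\ba}$ in the $\ba$-coordinate, and that is what combines with $t^{\lambda(\ba)}$ to give the desired exponent. That trick requires choosing the source sequence $x_n = t_n * x_0$ in a $t_n$-dependent way and is used by the paper in the \emph{reverse} inclusion (showing every point of $X_{\calF,w|_\calF}^\circ$ is a limit point). It does not establish what you want for an arbitrary sequence $x_n\to x_\infty$. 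The paper proves the forward inclusion by a completely different device: it writes down binomial equations satisfied on every $X_{\calA,w_\lambda(t)}$ — coming from affine relations $\mu\ba+\nu\bb=\sum_{\bff\in\calF}\alpha_\bff\bff$ between a ``bad'' pair $\ba,\bb$ and the facet $\calF$ whose relative interior the segment $\overline{\ba,\bb}$ crosses — and then uses the strict inequality $\mu\lambda(\ba)+\nu\lambda(\bb) < \sum\alpha_\bff\lambda(\bff)$ to force $z_\ba^\mu z_\bb^\nu\to 0$ uniformly. That argument needs no case analysis on where $x_\infty$ lands and also simultaneously handles the subtler point that accumulation points in $\simplex^\calF$ must actually lie on the subvariety $X_{\calF,w|_\calF}$, not merely in the face of the simplex.

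Second, the appeal to~\cite{KSZ1,KSZ2} as supplying Hausdorff convergence is not available here. The theorems of Kapranov--Sturmfels--Zelevinsky concern degenerations of the complex toric variety as points of the Chow variety; they give convergence of Chow forms (equivalently, of ideals), not Hausdorff convergence of the nonnegative real parts inside the simplex. Bridging that gap is nontrivial: the paper does it in Appendix~B by identifying the analytic topology on the Chow variety with the weak topology on currents, and then arguing separately that weak convergence of integration currents forces Hausdorff convergence of the supports. The paper deliberately reserves that machinery for Theorem~\ref{Th:convex} and proves Theorem~\ref{Th:limit} by the elementary, self-contained route in Appendix~A. Invoking KSZ for Theorem~\ref{Th:limit} therefore does not shorten the proof — it replaces an elementary argument with a harder one whose real-Hausdorff consequence you would still need to justify, and your final sentence conceding that the uniformity ``is exactly the content that [KSZ] provide'' is precisely the step that is missing.
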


That is, for every $\epsilon>0$ there is a number $M$ such that if $t\geq M$, then the 
patch $Y_{\calA,w,\calB,\lambda}(t)$ and the regular control surface 
$Y_{\calA,w,\calB}(\calS_\lambda)$ are within Hausdorff distance $\epsilon$.

We illustrate Theorem~\ref{Th:limit} on a bicubic patch.
On the left below are the weights of a bicubic patch, in the center are the values
of a lifting function, and the corresponding regular decomposition is on the right.
\[
   \begin{picture}(63,64)
    \put(0,60){1}\put(20,60){3}\put(40,60){3}\put(60,60){1}
    \put(0,40){3}\put(20,40){9}\put(40,40){9}\put(60,40){3}
    \put(0,20){3}\put(20,20){9}\put(40,20){9}\put(60,20){3}
    \put(0, 0){1}\put(20, 0){3}\put(40, 0){3}\put(60, 0){1}
   \end{picture}
   \qquad \qquad
   \begin{picture}(69,64)
    \put(0,60){0}\put(20,60){2}\put(40,60){2}\put(60,60){0}
    \put(0,40){1}\put(20,40){1}\put(40,40){1}\put(60,40){1}
    \put(0,20){1}\put(20,20){2}\put(40,20){2}\put(60,20){1}
    \put(0, 0){0}\put(20, 0){1}\put(40, 0){1}\put(55, 0){0.5}
   \end{picture}
   \qquad \qquad
   \includegraphics{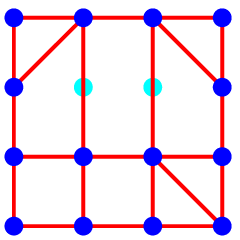}
\]
The two lighter points, $(1,2)$ and $(2,2)$, lie in no face of the decomposition.
Figure~\ref{Fig:Chungang} shows the toric degeneration of this bicubic patch at values $t=1$ and
$t=6$, and the regular control surface, all with the indicated control points.
\begin{figure}[htb]
\[
  \includegraphics[width=120pt]{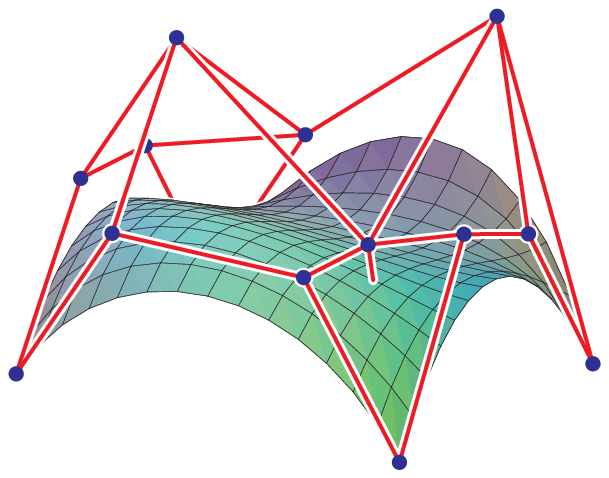}\qquad
  \includegraphics[width=120pt]{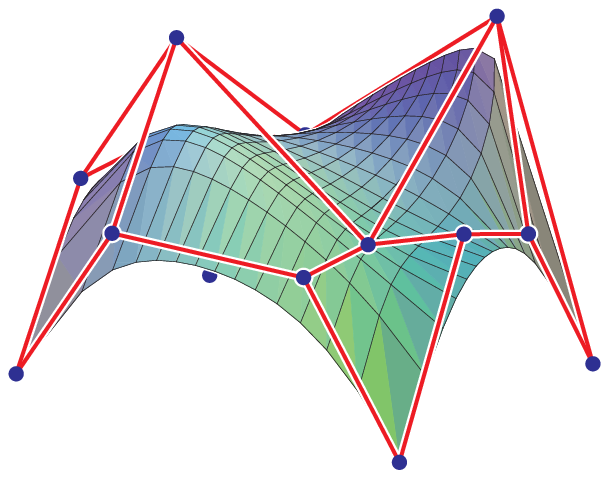}\qquad
  \includegraphics[width=120pt]{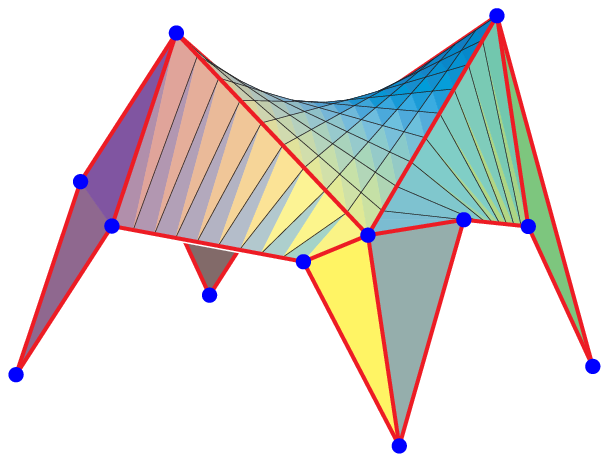}
\]
\caption{Toric degeneration of a rational tensor product patch of bidegree $(3,3)$.}
\label{Fig:Chungang}
\end{figure}

We will prove Theorem~\ref{Th:limit} in Appendix~\ref{Ap:one}.
The key idea is the factorization~\eqref{Eq:factorization} of the map
$F_{\calA,w,\calB,\lambda}(x;t)$ through the simplex $\simplex^\calA$.
This factorization allows us to study the limit in Theorem~\ref{Th:limit}
by considering the effect of the family of weights $w_\lambda(t)$
on the toric variety $X_{\calA}$ in $\simplex^\calA$.
Using equations for $X_{\calA}$, we can show the limit as $t\to\infty$ of  
the translated toric variety $X_{\calA,w_\lambda(t)}$ is a regular control surface in
$\R^\calA$ whose projection to $\R^3$ is the regular control surface
$Y_{\calA,w,\calB}(\calS_\lambda)$.

Figure~\ref{F:cubic_bezier} illustrates this lift, showing a toric degeneration of a rational cubic
B\'ezier curve, together with the corresponding degeneration of the curve
$X_{\calA,w}$ in the simplex $\simplex^\calA$.
Here, the weights are $w_\lambda(t)=(1,3t^2,3t^2,1)$.
That is, the control points $\bb_0$ and $\bb_3$ have weight 1, while the internal control points
$\bb_1$ and $\bb_2$ have weights $3t^2$.
\begin{figure}[ht]
\[
  \begin{picture}(135,192)(-2,0)
    \put(-1,10){\includegraphics[height=185pt]{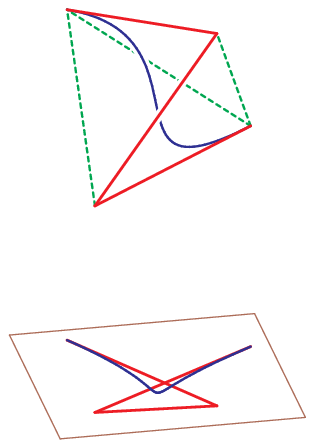}}
    \put(52,0){$t=1$}
    \put(5,105){$\pi_{\mathcal B}$} \put(21,135){\vector(0,-1){60}}
  \end{picture}
  \begin{picture}(133,187)
    \put(-1,10){\includegraphics[height=185pt]{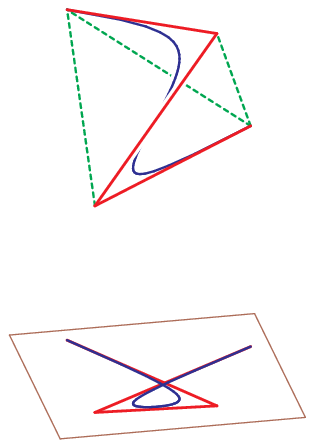}}
    \put(52,0){$t=3$}
    \put(5,105){$\pi_{\mathcal B}$} \put(21,135){\vector(0,-1){60}}
   \end{picture}
  \begin{picture}(133,187)
    \put(-1,10){\includegraphics[height=185pt]{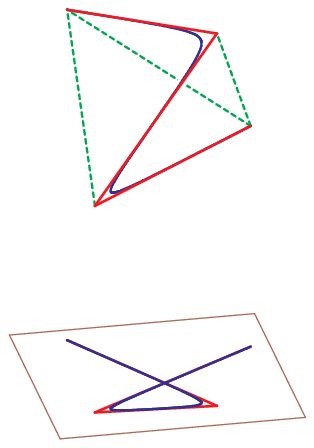}}
    \put(52,0){$t=9$}
    \put(5,105){$\pi_{\mathcal B}$} \put(21,135){\vector(0,-1){60}}
  \end{picture}
\]
\caption{Toric degenerations of a rational cubic B\'ezier curve.}
\label{F:cubic_bezier}
\end{figure}

By Theorem~\ref{Th:limit}, every regular control surface is
the limit of the corresponding patch under a toric degeneration.
Our second main result is a converse:   If a space $Y$ is the limit of patches of
shape $\calA$ with control points $\calB$, but differing weights, then $Y$ is a regular
control surface of shape $\calA$ and control points $\calB$.

\begin{theorem}\label{Th:convex}
  Let $\calA\subset\Z^2$ be a finite set and $\calB=\{\bb_\ba\mid\ba\in\calA\}\subset\R^3$ 
  a set of control points.
  If $Y\subset\R^3$ is a set for which there is a sequence $w^1,w^2,\dotsc$ of
  weights so that 
\[
    \lim_{i\to\infty} Y_{\calA,w^i,\calB}\ =\ Y\,.
\]
  then there is a lifting function $\lambda\colon\calA\to\R$ and a weight $w\in\R_>^\calA$
  such that $Y=Y_{\calA,w,\calB}(\calS_\lambda)$, a regular control surface.
\end{theorem}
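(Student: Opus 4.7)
The plan is to lift the problem from $\R^3$ to $\simplex^\calA$ through the factorization~\eqref{Eq:factorization}, and to invoke the Kapranov-Sturmfels-Zelevinsky classification~\cite{KSZ1,KSZ2} of toric degenerations of $X_\calA$. Since $Y_{\calA,w,\calB}=\pi_\calB(X_{\calA,w})$ with $X_{\calA,w}=w\cdot X_\calA\subset\simplex^\calA$, and $\pi_\calB$ is continuous and hence uniformly continuous on the compact simplex, Hausdorff convergence of the patches $Y_{\calA,w^i,\calB}$ downstairs follows from Hausdorff convergence of the translated toric varieties $X_{\calA,w^i}$ upstairs. It therefore suffices to identify all possible Hausdorff limits of such translates inside $\simplex^\calA$.

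First, because the space of compact subsets of $\simplex^\calA$ is itself compact in the Hausdorff metric, after passing to a subsequence the lifts $X_{\calA,w^i}$ converge to some compact $Z\subset\simplex^\calA$ with $\pi_\calB(Z)=Y$. To identify $Z$, set $\ell^i:=\log w^i\in\R^\calA$ and note that the action of weights on $\simplex^\calA$ factors through the quotient $\R^\calA/\R$. If the $\ell^i$ are bounded modulo constants, a subsequence of the $w^i$ themselves converges to a weight $w$, so $Z=X_{\calA,w}$ and $Y=Y_{\calA,w,\calB}$, which is a regular control surface for the trivial one-facet decomposition ($\lambda\equiv 0$). Otherwise, passing again to a subsequence, we may write $\ell^i=r_i\lambda+\sigma^i$ with $r_i\to\infty$, where $\lambda\colon\calA\to\R$ is a fixed function (determined modulo constants by the limiting direction of $\ell^i/\|\ell^i\|$), and $\sigma^i$ is bounded on each facet $\calF$ of the regular decomposition $\calS_\lambda$ induced by $\lambda$. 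This facet-wise boundedness holds because $\lambda|_\calF$ is affine on each upper facet of $P_\lambda$, which governs the leading-order asymptotics and forces the bounded residuals to live on the same combinatorial scale on each piece. A further subsequence makes $\sigma^i$ converge facet-by-facet to the logarithm of a weight $w\in\R_>^\calA$.

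By the KSZ classification~\cite{KSZ1,KSZ2}, whose positive-real-part analogue is exactly the content of Theorem~\ref{Th:limit} applied to the family $w_\lambda(t)$ determined by the extracted $(\lambda,w)$---to which $w^i$ is asymptotic by construction---the limit $Z$ is identified as the regular-decomposition stratum $\bigcup_{\calF\in\calS_\lambda}X_{\calF,w|_\calF}$. Projecting gives $Y=\pi_\calB(Z)=\bigcup_{\calF\in\calS_\lambda}Y_{\calF,w|_\calF,\calB|_\calF}=Y_{\calA,w,\calB}(\calS_\lambda)$, the desired regular control surface. The main obstacle is precisely this identification step: extracting a coherent pair $(\lambda,w)$ from an arbitrary convergent sequence of weights and matching the abstract limit $Z$ with the specific union of toric sub-patches. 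This rests on the KSZ classification, which ensures that the only Hausdorff limits of torus translates of $X_\calA$ are these regular-decomposition strata, ruling out any exotic possibilities.
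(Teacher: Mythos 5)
Your high-level plan---lift to $\simplex^\calA$, pass to a Hausdorff-convergent subsequence of $X_{\calA,w^i}$, identify the limit using KSZ, then project by $\pi_\calB$---matches the paper's strategy, but the middle step is where the real work lives and your argument there has a genuine gap.

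The claim that $\lambda$ can be taken to be ``the limiting direction of $\ell^i/\|\ell^i\|$'' (even modulo constants or modulo affine functions) is false. Take $\calA=\{0,1,2,3,4\}$ and $\ell^i=(0,\,i^2,\,i^2+i,\,i^2,\,0)$. Then $\ell^i/\|\ell^i\|$ converges to a multiple of $\lambda_1=(0,1,1,1,0)$, whose induced decomposition $\calS_{\lambda_1}$ has facets $\{0,1\},\{1,2,3\},\{3,4\}$. But a direct computation of dominant exponents shows that the Hausdorff limit of $X_{\calA,w^i}$ is governed by the finer decomposition $\{0,1\},\{1,2\},\{2,3\},\{3,4\}$ (the one induced by each $\ell^i$ itself), and in particular the limit does not contain the rational normal curve in $\simplex^{\{1,2,3\}}$ that $\calS_{\lambda_1}$ would predict. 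In general the leading direction only determines a first-stage degeneration; the lower-order terms of $\ell^i$ continue to refine the decomposition on each facet, and one must in principle iterate this analysis through a whole flag of directions. Your statement that ``$\sigma^i$ is bounded on each facet $\calF$ because $\lambda|_\calF$ is affine'' is unjustified and, as this example shows, not true for the $\lambda$ you propose. There is also a second gap: even granting a correct pair $(\lambda,w)$, you invoke Theorem~\ref{Th:limit}, but that theorem concerns Hausdorff limits along a one-parameter family $w_\lambda(t)$; passing from ``the sequence $w^i$ is asymptotic to $w_\lambda(t)$'' to ``the limits of $X_{\calA,w^i}$ and $X_{\calA,w_\lambda(t)}$ coincide'' is not a formal consequence and requires an argument you do not supply.

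The paper sidesteps both problems by working in the Chow quotient $C_\calA(\C)$ of \cite{KSZ1,KSZ2}: its compactness produces a convergent subsequence of the points $[X_{\calA,w^i}(\C)]$ whose limit is automatically a toric degeneration, i.e.\ a pair $(\lambda,w)$, with no need to extract $\lambda$ by hand from $\log w^i$; and the paper then proves a separate claim---that convergence of Chow points in the analytic topology implies Hausdorff convergence of the underlying cycles---using Lawson's current-theoretic description of the Chow topology. That claim is exactly the technical content your proof replaces with the phrase ``by the KSZ classification,'' and it is not available for free. To repair your proof you would either need to reproduce this compactness-plus-currents argument, or carry out a genuinely nested extraction of directions from $\log w^i$ together with a direct proof that the resulting iterated degeneration agrees with the Hausdorff limit.
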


To prove Theorem~\ref{Th:convex}, we consider the sequence of translated toric varieties
$X_{\calA,w^i}\subset\simplex^\calA$.
We show how work of  Kapranov, Sturmfels, and
Zelevinsky~\cite{KSZ1,KSZ2} implies that the set of all translated toric
varieties is naturally compactified by the set of all regular control surfaces in $\simplex^\calA$.
Thus some subsequence of $\{X_{\calA,w^i}\}$ converges to a regular control surface in
$\simplex^\calA$, whose image must coincide with $Y$, implying that $Y$ is a regular control
surface.
This method of proof does not give a simple way to recover a lifting function $\lambda$ or the
weight $w$ from the sequence of weights $w^1,w^2,\dotsc$.

We prove Theorem~\ref{Th:limit} in Appendix~\ref{Ap:one} and Theorem~\ref{Th:convex} in
Appendix~\ref{Ap:two}.
While both require more algebraic geometry than we have assumed so far,
Appendix~\ref{Ap:one} is more elementary and Appendix~\ref{Ap:two} is significantly more
sophisticated.

%
\appendix
\section{Proof of Theorem 5.1}\label{Ap:one}

Let $d,n$ be positive integers.
The definitions and results of Sections~\ref{S:toric}, \ref{S:regular}, and~\ref{S:control_surface},
as well as the statements of Theorems~\ref{Th:limit} and~\ref{Th:convex} make sense if we
replace $\calA\subset\Z^2$ by $\calA\subset\Z^d$ and $\calB\subset\R^3$ by
$\calB\subset\R^n$.
We work here in this generality.
This requires straightforward modifications such as replacing polygon by
polytope and in general removing restrictions on dimension.
We invite the reader to consult~\cite{Craciun} for a more complete treatment.

If the control points $\calB$ are the vertices
$\{e_\ba\mid\ba\in\calA\}\subset\R^\calA$ of the simplex
$\simplex^\calA$, then the toric patch $Y_{\calA,w,\calB}$ is 
the (translated) toric variety $X_{\calA,w}$. Given a decomposition $\calS$ of
$\calA$, write $X_{\calA,w}(\calS)$ for the control surface induced
by $\calS$ when the control points are the vertices of
$\simplex^\calA$. This is the union of patches $X_{\calF,w|_\calF}$
over all faces $\calF$ of $\calS$, and each patch
$X_{\calF,w|_\calF}$ lies in the face $\simplex^\calF$ of
$\simplex^\calA$ consisting of points $z$ whose coordinates $z_\ba$
vanish for $\ba\not\in\calF$.

Then, given any control points $\calB\subset\R^n$, we have
\[
   \pi_\calB(X_{\calA,w})\ =\ Y_{\calA,w,\calB}
   \qquad\mbox{and}\qquad
   \pi_\calB(X_{\calA,w}(\calS))\ =\ Y_{\calA,w,\calB}(\calS)\,.
\]
Because of this universality of $X_{\calA,w}$, $X_{\calA,w}(\calS)$, and the map
$\pi_\calB$, it suffices to prove Theorem~\ref{Th:limit} for limits of the toric variety
$X_{\calA,w}$.
Given a function  $\lambda\colon\calA\to\R$ and a weight $w\in\R^\calA_>$, define the family
of weights $\dcol{w_\lambda(t)}=\{t^{\lambda(\ba)}w_\ba\mid\ba\in\calA\}$ for $t\in\R_>$.

\begin{theorem}\label{Th:UniLimit}
 Let $\lambda\colon\calA\to\R$ be a lifting function and $\calS_\lambda$ the regular
 decomposition of $\calA$ induced by $\lambda$.
 Then, for any choice of weights $w\in\R^\calA_>$,
\[
   \lim_{t\to\infty} X_{\calA,w_\lambda(t)}\ =\ X_{\calA,w}(\calS_\lambda)\,.
\]
\end{theorem}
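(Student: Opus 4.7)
The plan is to exploit the explicit monomial parameterization of the dense part of $X_{\calA,w_\lambda(t)}$ and read off the limit as $t\to\infty$ in homogeneous coordinates on $\simplex^\calA$. A generic point of $X_{\calA,w_\lambda(t)}$ has the form $[t^{\lambda(\ba)}w_\ba x^\ba\mid\ba\in\calA]$ for $x\in\R_>^d$. I would reparametrize by writing $x = y\cdot t^{-u}$ with $(y,u)\in\R_>^d\times\R^d$, so that such a point becomes
\[
   \bigl[\,w_\ba\,y^\ba\,t^{\phi_u(\ba)}\;\big|\;\ba\in\calA\,\bigr],
   \qquad\phi_u(\ba):=\lambda(\ba)-\langle u,\ba\rangle,
\]
cleanly isolating the $t$-dependence in the exponents. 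The reparametrization is surjective (though far from injective) onto $\R_>^d$ for each fixed $t$, so it covers the dense part of $X_{\calA,w_\lambda(t)}$.

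Setting $c(u):=\max_{\ba}\phi_u(\ba)$ and $\calF_u:=\{\ba\mid\phi_u(\ba)=c(u)\}$, the pointwise limit as $t\to\infty$ is $[w_\ba y^\ba\mid\ba\in\calF_u,\; 0\mid\ba\notin\calF_u]$, a point of $X_{\calF_u,w|_{\calF_u}}\subset\simplex^{\calF_u}\subset\simplex^\calA$. By the definition of a regular decomposition, $\calF_u$ is precisely the face of $\calS_\lambda$ on which the affine functional $\lambda(\cdot)-\langle u,\cdot\rangle$ attains its maximum over $\calA$; as $u$ ranges over $\R^d$ the collection $\{\calF_u\}$ exhausts the faces of $\calS_\lambda$, with each facet $\calF$ arising from a single distinguished $u_\calF\in\R^d$. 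Fixing $u=u_\calF$ and varying $y\in\R_>^d$, the above limits trace out the dense part of the patch $X_{\calF,w|_\calF}$, and taking closures and unioning over all facets yields the inclusion $X_{\calA,w}(\calS_\lambda)\subseteq\lim_{t\to\infty}X_{\calA,w_\lambda(t)}$.

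For the reverse inclusion I would use compactness of $\simplex^\calA$: any accumulation point of a sequence $z_i\in X_{\calA,w_\lambda(t_i)}$ with $t_i\to\infty$ is represented by parameters $(y_i,u_i)$ as above. After rescaling so that $c(u_i)=0$ and passing to subsequences, the set $\calF_{u_i}$ stabilizes to some face $\calF$ of $\calS_\lambda$, the exponents $\phi_{u_i}(\ba)$ for $\ba\notin\calF$ tend to values in $[-\infty,0)$ forcing those coordinates to $0$, and a further subsequence on the $y_i$ places the limit into the closure of $X_{\calF,w|_\calF}$. Any remaining boundary escape (some $y_i^\ba\to 0$ or $\infty$ in an unbalanced way) lands the limit on a further face of $\Delta_\calF$, producing a patch on a subface of $\calF$, which still lies in $X_{\calA,w}(\calS_\lambda)$ by the recursive toric-patch structure.

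The main obstacle is upgrading these pointwise (Kuratowski) statements to honest Hausdorff convergence with a uniform rate: the decay rate of the subdominant coordinates $t^{\phi_u(\ba)-c(u)}$ shrinks to zero as $u$ approaches the boundary between cones on which $\calF_u$ is constant, and simultaneously the reparametrization variables $(y_i,u_i)$ can escape to infinity. The cleanest remedy is to partition $\R^d$ according to the inner normal fan of the upper hull of $P_\lambda$ — so that on the interior of each cone $\calF_u$ is constant and convergence is uniform on compact subsets — and to handle the escape to infinity by induction on the dimension of the associated face of $\Delta_\calA$, using that control surfaces of lower-dimensional face-configurations are again regular control surfaces lying inside $X_{\calA,w}(\calS_\lambda)$.
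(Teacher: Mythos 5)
Your argument for the easy inclusion (every point of $X_{\calA,w}(\calS_\lambda)$ is a limit point) is essentially the paper's: the choice $u=u_\calF$ of a normal vector defining the facet $\calF$ in the upper hull of $P_\lambda$ is exactly the vector $\bv$ in the paper's second lemma, and the rescaled one-parameter family $t\mapsto t.\varphi_{\calA,w}(t*x)$ is your $(y,u_\calF)$-parametrization. So that half matches.

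For the hard inclusion---that every accumulation point lies in $X_{\calA,w}(\calS_\lambda)$---your route is genuinely different from the paper's, and you have correctly flagged the place where it is incomplete. You work purely from the torus parametrization $[t^{\lambda(\ba)}w_\ba x^\ba]$ and must control limits as both $t_i\to\infty$ and $(y_i,u_i)$ degenerate. The trouble you identify is real: the decomposition of a point as $(y_i,u_i)$ is far from unique, and if $u_i$ approaches a wall between two normal cones while $t_i\to\infty$, the products $t_i^{\phi_{u_i}(\ba)-c(u_i)}$ for $\ba$ outside the stabilized face need not tend to $0$; they can converge to arbitrary values in $[0,1]$, which is exactly what would let an accumulation point escape $|\calS_\lambda|$. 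Your proposed repair---partition $\R^d$ by the normal fan of the upper hull and induct on face dimension---is a plausible program but is not carried out, and the crux (uniform rate of decay as $u$ approaches a cone boundary simultaneously with $y$ escaping) is precisely what remains to be proved. The paper avoids this entirely by working with the binomial equations of the toric variety rather than its parametrization: Proposition~\ref{Prop:toric_equations} produces, for any $\ba,\bb$ not in a common face, a relation $z_\ba^\mu z_\bb^\nu \cdot C = t^{-\delta}\prod_\bff z_\bff^{\alpha_\bff}$ that holds \emph{identically on all of} $X_{\calA,w_\lambda(t)}$, with $\delta>0$ coming from strict convexity of the lift. Since all $z$-coordinates are in $[0,1]$, this gives $z_\ba^\mu z_\bb^\nu < t^{-\delta}\cdot C$ uniformly over the whole variety, with no parametrization and hence no uniformity issue; that is Lemma~\ref{L:Accumulation}. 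If you want to salvage the parametric approach, the cleanest fix is probably to abandon the reparametrization $x=y\cdot t^{-u}$ and instead import this uniform bound from the equations, since the parametrization alone does not see that the bound is uniform over the torus orbit closure.
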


We prove Theorem~\ref{Th:UniLimit} in two parts.
We first show that any accumulation points of $\{X_{\calA,w_\lambda(t)}\mid t\geq 1\}$ as
$t\to\infty$ are
contained in the union of the faces $\simplex^\calF$ of the simplex $\simplex^\calA$
for each face $\calF$ of $\calS_\lambda$.
Then we show that $X_{\calF,w|_\calF}$ is the set of accumulation points contained in the face
$\simplex^\calF$, and in fact each accumulation point is a limit point of the sequence.
This will complete the proof of Theorem~\ref{Th:UniLimit} as
\[
    X_{\calA,w}(\calS_\lambda)\ =\ \bigcup_{\calF\in\calS} X_{\calF,w|_\calF}\,.
\]

We use homogeneous equations for the toric variety $X_{\calA,w}$.
Let $\mathbf{1}\in\R^\calA_>$ be the weight with every coordinate $1$.
Equations for $X_{\calA,\mathbf{1}}$ were described in~\cite[Prop.~B.3]{Craciun} as follows.
For every linear relation among the points of $\calA$ with nonnegative integer coefficients
 \begin{equation}\label{Eq:N-relation}
   \sum_{\ba\in\calA} \alpha_\ba \ba\ =\ \sum_{\ba\in\calA} \beta_\ba \ba\,
    \qquad\mbox{where}\qquad
   \sum_{\ba\in\calA}\alpha_\ba \ =\ \sum_{\ba\in\calA} \beta_\ba\,,
 \end{equation}
with $\alpha_\ba,\beta_\ba\in\N$,
we have the valid equation for points $z\in X_{\calA,\mathbf{1}}$,
 \begin{equation}\label{Eq:poly-relation}
   \prod_{\ba\in\calA}z_\ba^{\alpha_\ba}\ =\
   \prod_{\ba\in\calA}z_\ba^{\beta_\ba}\,.
 \end{equation}
Conversely, if $z\in\simplex^\calA$ satisfies equation~\eqref{Eq:poly-relation} for
every relation~\eqref{Eq:N-relation}, then $z\in X_{\calA,\mathbf{1}}$.
This follows from the description of toric ideals in~\cite[Ch.~4]{GBCP}.

Since the toric variety $X_{\calA,w}$ is obtained from $X_{\calA,\mathbf{1}}$ through
coordinatewise multiplication 
by $w=(w_\ba\mid \ba\in\calA)$, we have the following description of its equations.

\begin{proposition}\label{P:equations}
  A point $z\in \simplex^\calA$ lies in $X_{\calA,w}$ if and only if
 \[
   \prod_{\ba\in\calA}z_\ba^{\alpha_\ba} \cdot \prod_{\ba\in\calA}w_\ba^{\beta_\ba}
   \ =\
   \prod_{\ba\in\calA}z_\ba^{\beta_\ba} \cdot \prod_{\ba\in\calA}w_\ba^{\alpha_\ba}\,,
 \]
 for every relation~\eqref{Eq:N-relation} among the points of $\calA$.
\end{proposition}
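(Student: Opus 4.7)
The plan is straightforward: transport the already-stated equations for $X_{\calA,\mathbf{1}}$ along the torus action given by $w\cdot$. Because every $w_\ba$ is strictly positive, the map $w\cdot\colon\simplex^\calA\to\simplex^\calA$ is a bijection with inverse $w^{-1}\cdot$, where $w^{-1} := (1/w_\ba\mid\ba\in\calA)$. By definition $X_{\calA,w} = w\cdot X_{\calA,\mathbf{1}}$, so $z\in X_{\calA,w}$ if and only if $w^{-1}\cdot z\in X_{\calA,\mathbf{1}}$, which in homogeneous coordinates is the point $y$ with representatives $y_\ba = z_\ba/w_\ba$.

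Next I would invoke the criterion recalled just above the proposition: $y\in X_{\calA,\mathbf{1}}$ precisely when $\prod_\ba y_\ba^{\alpha_\ba} = \prod_\ba y_\ba^{\beta_\ba}$ for every $\N$-relation~\eqref{Eq:N-relation}. Plugging $y_\ba = z_\ba/w_\ba$ into this identity gives $\prod_\ba z_\ba^{\alpha_\ba}/\prod_\ba w_\ba^{\alpha_\ba} = \prod_\ba z_\ba^{\beta_\ba}/\prod_\ba w_\ba^{\beta_\ba}$, and cross-multiplying yields exactly the identity stated in the proposition. The converse follows by reversing the manipulation: if $z\in\simplex^\calA$ satisfies all the stated equations, then $y := w^{-1}\cdot z$ satisfies~\eqref{Eq:poly-relation} for every $\N$-relation, hence lies in $X_{\calA,\mathbf{1}}$, so $z = w\cdot y \in X_{\calA,w}$.

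The point that deserves care---and the only potential obstacle---is the passage to projective/homogeneous coordinates. Positivity of $w$ ensures that the substitution $y_\ba = z_\ba/w_\ba$ is unambiguous for every $z\in\simplex^\calA$, including boundary points where some coordinates $z_\ba$ vanish (the denominator $w_\ba$ is never zero). The homogeneity condition $\sum_\ba \alpha_\ba = \sum_\ba \beta_\ba$ built into~\eqref{Eq:N-relation} guarantees that both sides of the stated identity are homogeneous of the same total degree in the $z_\ba$ and, separately, in the $w_\ba$, so the equation is invariant under the rescaling implicit in the homogeneous coordinates~\eqref{Eq:Homog_Coords} and therefore well-defined on all of $\simplex^\calA$. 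Beyond this bookkeeping, the proof is a direct translation of the known description of $X_{\calA,\mathbf{1}}$ through the torus action by $w$.
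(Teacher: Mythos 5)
Your proposal is correct and takes essentially the same route as the paper: the paper offers no proof beyond the sentence preceding the proposition---that $X_{\calA,w}$ is the coordinatewise translate $w\cdot X_{\calA,\mathbf{1}}$---and your argument simply spells out the resulting substitution $z_\ba\mapsto z_\ba/w_\ba$ in the equations~\eqref{Eq:poly-relation} and cross-multiplies. Your remark that the homogeneity $\sum\alpha_\ba=\sum\beta_\ba$ makes the computation independent of the choice of representative in the homogeneous coordinates is a correct and welcome bit of bookkeeping that the paper leaves implicit.
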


\begin{rem}{\rm
 As every component of a point $z\in\simplex^\calA$ and weight $w$ is nonnegative, we may
 take arbitrary (positive) roots of the equations in Proposition~\ref{P:equations}.
 It follows that we may relax the requirement that the coefficients $\alpha_\ba$ and $\beta_\ba$
 in~\eqref{Eq:N-relation} are integers and allow them to be any nonnegative numbers such that
 \begin{equation}\label{Eq:convex-relation}
   \sum_{\ba\in\calA} \alpha_\ba \ba\ =\ \sum_{\ba\in\calA} \beta_\ba \ba\,
    \qquad\mbox{where}\qquad
   \sum_{\ba\in\calA}\alpha_\ba \ =\ \sum_{\ba\in\calA} \beta_\ba\ =\ 1\,,
 \end{equation}
 That is,
 $\sum \alpha_\ba\ba=\sum\beta_\ba\ba$ is a point in the convex hull of the set $\calA$
 that has more than one representation as a convex combination of points of $\calA$.
 Since $\calA\subset\Z^d$, we may assume that $\alpha_\ba,\beta_\ba$ are rational numbers.}
\end{rem}

Among all relations~\eqref{Eq:convex-relation} are those which arise when two subsets of
$\calA$ have intersecting convex hulls.

\begin{proposition}\label{Prop:toric_equations}
 Let $\calF,\calG\subset\calA$ be disjoint subsets whose convex hulls meet, 
\[
   \conv \calF \cap \conv\, \calG\ \neq\ \emptyset\,.
\]
 Then we have a relation of the form
\[
   \sum_{\ba\in\calF} \alpha_\ba \ba\ =\ \sum_{\ba\in\calG} \beta_\ba \ba\,
    \qquad\mbox{where}\qquad
   \sum_{\ba\in\calF}\alpha_\ba \ =\ \sum_{\ba\in\calG} \beta_\ba\ =\ 1\,,
\]
 with $\alpha_\ba,\beta_\ba\geq 0$.
 Thus
 \begin{equation}\label{Eq:toric_equations}
   \prod_{\ba\in\calF}z_\ba^{\alpha_\ba} \cdot \prod_{\ba\in\calG}w_\ba^{\beta_\ba}
   \ =\
   \prod_{\ba\in\calG}z_\ba^{\beta_\ba} \cdot \prod_{\ba\in\calF}w_\ba^{\alpha_\ba}\,,
 \end{equation}
 holds on $X_{\calA,w}$.
\end{proposition}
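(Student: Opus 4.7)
The plan is to extract the required relation directly from a single point in the intersection of the two convex hulls, and then invoke the extended form of Proposition~\ref{P:equations} recorded in the preceding remark.

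First I would pick a point $p \in \conv\calF \cap \conv\calG$. By definition of convex hull, $p$ admits two expressions
\[
  p \;=\; \sum_{\ba\in\calF}\alpha_\ba\,\ba \;=\; \sum_{\ba\in\calG}\beta_\ba\,\ba,
\]
with $\alpha_\ba,\beta_\ba\geq 0$ and $\sum_{\ba\in\calF}\alpha_\ba = \sum_{\ba\in\calG}\beta_\ba = 1$. Extending $\alpha$ by zero on $\calG$ and $\beta$ by zero on $\calF$ (which is well-defined since $\calF\cap\calG=\emptyset$) produces a relation of the form~\eqref{Eq:convex-relation} among the points of $\calA$. This is exactly the first assertion of the proposition.

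To deduce~\eqref{Eq:toric_equations} on $X_{\calA,w}$, I would appeal to the remark immediately preceding the proposition, which allows the exponents $\alpha_\ba,\beta_\ba$ in Proposition~\ref{P:equations} to be arbitrary nonnegative rationals rather than merely nonnegative integers. Substituting our relation into that extended form gives precisely~\eqref{Eq:toric_equations}.

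The only mild obstacle is rationality of the exponents: the point $p$ chosen above, and hence the coefficients $\alpha_\ba,\beta_\ba$, need not a priori be rational. I would resolve this by observing that $\conv\calF$ and $\conv\calG$ are rational polytopes (their vertices lie in $\Z^d$), so their intersection is a nonempty rational polytope and therefore contains a rational point $p$. Carath\'eodory's theorem applied inside each rational polytope then yields rational convex combinations expressing $p$, making $\alpha_\ba,\beta_\ba\in\mathbb{Q}_{\geq 0}$ as required. Clearing a common denominator and invoking Proposition~\ref{P:equations} in its original integer form (or, equivalently, directly using the remark) yields the stated monomial identity on $X_{\calA,w}$.
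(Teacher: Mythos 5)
Your proof is correct and follows essentially the same route the paper intends: the paper states this proposition without an explicit proof, treating it as an immediate consequence of the preceding remark together with the definition of the convex hull, which is exactly the chain of reasoning you spell out. Your explicit handling of the rationality issue---picking a rational $p$ in the intersection of two rational polytopes, then using Carath\'eodory to express $p$ as a convex combination of affinely independent lattice points so that the coefficients $\alpha_\ba,\beta_\ba$ are forced to be rational---is a careful justification of a point the paper's remark merely asserts, and is a sound addition.
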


Given a subset $\calF\subset\calA$, the convex hull of the points $\{e_\bff\mid \bff\in\calF\}$
is the simplex \demph{$\bsimplex^\calF$}, which is a face of $\simplex^\calA$.
Under the tautological projection $\pi_\calA$ of $\simplex^\calA$ to $\Delta_\calA$,
the simplex $\simplex^\calF$ maps to the convex hull
$\Delta_\calF$ of $\calF$.
The \demph{geometric realization $|\calS|$} of a decomposition $\calS$ of $\calA$ is
the union of the simplices $\simplex^\calF$ for each face $\calF\in\calS$ of the
decomposition $\calS$.
We call a simplex $\simplex^\calF$ a \demph{face} of the geometric realization
$|\calS|$.
The images of the faces of the geometric realization $|\calS|$ under the
tautological projection $\pi_\calA$ form the faces of the polyhedral decomposition
$\calT(\calS)$.
Figure~\ref{F:Geom_real} illustrates this geometric realization for four regular decompositions of
$\calA=\{0,1,2,3\}$.
For this, $\simplex^\calA$ is
\begin{figure}[htb]
\[
  \begin{picture}(78,115)
    \put(0,15){\includegraphics[height=100pt]{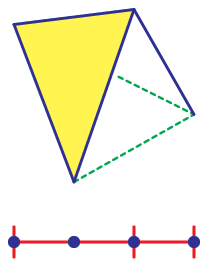}}
    \put(2,0){$\{0,1,2\}$, $\{2,3\}$}
   \end{picture}
   \qquad
   \begin{picture}(98,115)(-10,0)
    \put(0,15){\includegraphics[height=100pt]{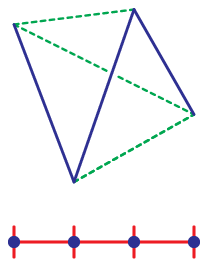}}
    \put(-10,0){$\{0,1\}$, $\{1,2\}$, $\{2,3\}$}
   \end{picture}
   \qquad
    \begin{picture}(78,115)
    \put(0,15){\includegraphics[height=100pt]{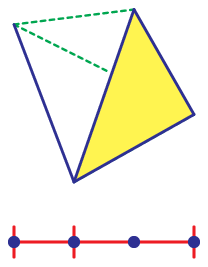}}
    \put(0,0){$\{0,1\}$, $\{1,2,3\}$}
   \end{picture}
   \qquad
    \begin{picture}(78,115)
    \put(0,15){\includegraphics[height=100pt]{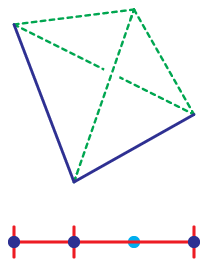}}
    \put(0,0){$\{0,1\}$, $\{1,3\}$}
   \end{picture}
\]
\caption{Geometric realizations for four decompositions.}
\label{F:Geom_real}
\end{figure}
the 3-dimensional simplex.
For each decomposition $\calS$ of $\calA$, we show the corresponding polyhedral decomposition of 
$\Delta_\calA=[0,3]$ and its facets.

Suppose that a point $z\in\simplex^\calA$ lies in the geometric realization
$|\calS|$ of a decomposition $\calS$ of $\calA$.
Then $z\in\simplex^\calF$ for some face $\calF$ of $\calS$, so that its \demph{support}
$\{\ba\in\calA\mid z_\ba\neq 0\}$ is a subset of $\calF$.
Conversely, any point $z\in\simplex^\calA$ whose support is a subset of some face $\calF$
of $\calS$ lies in $|\calS|$.
We conclude that $|\calS|\subset\simplex^\calA$ is the vanishing locus of the monomials
 \begin{equation}\label{Eq:SR-eqs}
   \{z_\ba\cdot z_\bb\mid \{\ba,\bb\}\not\subset \mbox{ any face $\calF$ of }\calS\}
    \ \cup\ \{z_\bc\mid \bc\not\in\mbox{ any face $\calF$ of }\calS\}\,.
 \end{equation}

A point $z\in\simplex^\calA$ is an \demph{accumulation point} of a sequence
$\{X_1,X_2,\dotsc\}$ of subsets of $\simplex^\calA$ if, for every $\epsilon>0$
and every $M$, there is some $m\geq M$ such that
$\dist(z,\, X_m)<\epsilon$.
Similarly, a point $z$ is an accumulation point of a family $\{X(t)\mid t\in\R_>\}$ if
for every $\epsilon>0$ and $M>0$, there is some $t>M$ such that
$\dist(z,\, X(t))<\epsilon$, and $z$ is a \demph{limit point} if 
$\lim_{t\to\infty}\dist(z,\, X(t))=0$.

\begin{lemma}\label{L:Accumulation}
 Let $w\in\R^\calA_>$ be a weight and $\lambda\colon \calA\to \R$ be a lifting
 function and $w_\lambda(t)$ the corresponding family of weights.
 Every accumulation point of $\{X_{\calA,w_\lambda(t)}\mid t\in\R_>\}$ lies in the
 geometric realization $|S_\lambda|$.
\end{lemma}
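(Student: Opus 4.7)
The plan is to show that any accumulation point $z$ satisfies the monomial equations~\eqref{Eq:SR-eqs} cutting out $|\calS_\lambda|$, by taking $t\to\infty$ limits of the toric relations from Proposition~\ref{P:equations} applied to the family $w_\lambda(t)$.

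First I would record the geometric fact underlying~\eqref{Eq:SR-eqs}: a subset of $\calA$ lies in a face of $\calS_\lambda$ iff its lift lies on an upper facet of $P_\lambda$. Hence, if $\{\ba,\bb\}$ lies in no face of $\calS_\lambda$, then the midpoint $m=(\ba+\bb)/2$ lifts strictly below the upper concave envelope $\lambda^*$, so there exist nonnegative reals $\beta_\bc$ with $\sum_\bc\beta_\bc=1$, $\sum_\bc\beta_\bc\bc=m$, and
\[
   \sum_\bc \beta_\bc \lambda(\bc)\ >\ \tfrac12 \bigl(\lambda(\ba)+\lambda(\bb)\bigr).
\]
Doubling yields a real nonnegative relation $\ba+\bb=\sum_\bc 2\beta_\bc\bc$ with both sides summing to $2$. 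Analogously, for $\bc\in\calA$ not in any face, the strict inequality $\lambda^*(\bc)>\lambda(\bc)$ produces a relation $\bc=\sum_\bd \beta_\bd\bd$ with $\sum_\bd \beta_\bd\lambda(\bd)>\lambda(\bc)$.

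Next I would pick a sequence $z^{(k)}\in X_{\calA,w_\lambda(t_k)}$ with $t_k\to\infty$ and $z^{(k)}\to z$, plug the relations above into Proposition~\ref{P:equations} (the remark after it permits real exponents), and replace $w$ by $w_\lambda(t_k)$. Gathering powers of $t_k$ on the left, the pair relation becomes
\[
  z^{(k)}_\ba\, z^{(k)}_\bb\cdot t_k^{E}\ =\ C\cdot\prod_\bc (z^{(k)}_\bc)^{2\beta_\bc},
\]
with exponent $E=2\sum_\bc \beta_\bc\lambda(\bc)-\lambda(\ba)-\lambda(\bb)>0$ and a positive constant $C$ independent of $k$. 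Since every coordinate of $z^{(k)}$ lies in $[0,1]$, the right-hand side is bounded while $t_k^E\to\infty$, forcing $z_\ba z_\bb=0$ in the limit. The single-point case runs identically, giving $z_\bc=0$ for every $\bc$ outside every face of $\calS_\lambda$. Combining these vanishings, $z$ satisfies~\eqref{Eq:SR-eqs}, and hence $z\in|\calS_\lambda|$.

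The main obstacle is the possibility that $\ba$ or $\bb$ appears in the support of the optimal convex combination for $m$, so that $z_\ba$ or $z_\bb$ occurs on both sides of the relation. I would handle this in two stages: first, apply the single-point argument to conclude $z_\bc=0$ for every $\bc\notin$ a face of $\calS_\lambda$; second, in the remaining pair case where both $\ba$ and $\bb$ separately lie in faces, use the concavity of $\lambda^*$ together with the strict midpoint inequality to show that the coefficients of $\ba$ and of $\bb$ in any lift-optimal convex combination of $m$ must be strictly less than $1/2$ (otherwise one could rewrite $\bb$ as a convex combination of vertices of an upper facet with lift exceeding $\lambda(\bb)=\lambda^*(\bb)$, contradicting the definition of $\lambda^*$). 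This allows the common factors to be cancelled and the displayed equation to be derived with the exponent $E$ unchanged and the remaining $z$-exponents nonnegative, completing the argument.
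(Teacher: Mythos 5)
Your proof is correct and follows essentially the same strategy as the paper's: produce a toric relation between $\{\ba,\bb\}$ and other points of $\calA$, twist by $w_\lambda(t)$, and observe that the mismatch in lifted values forces $z_\ba z_\bb\to 0$. The paper realizes this by picking the minimal face $\calF$ of $\calT(\calS_\lambda)$ whose relative interior meets the interior of $\overline{\ba,\bb}$ at a point $p$, whereas you use the midpoint $m$; both choices work, since what matters is only that the chosen point lies strictly below the upper envelope $\lambda^*$ along the lift of $\overline{\ba,\bb}$.

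However, the \emph{main obstacle} you identify is not actually an obstacle, and the two-stage fix you build for it is unnecessary. The relation you derive already has all exponents nonnegative on both sides:
\[
  z_\ba\, z_\bb \cdot t^{E}\ =\ C\cdot\prod_{\bc\in\calA} z_\bc^{\,2\beta_\bc},
  \qquad E>0,\ C>0,\ \ 2\beta_\bc\geq 0,
\]
and since every coordinate of $z\in\simplex^\calA$ lies in $[0,1]$, the product on the right is at most $1$ \emph{regardless of whether $\ba$ or $\bb$ occurs among the $\bc$'s}. So $z_\ba z_\bb\leq C\,t^{-E}\to 0$ follows immediately, with no need to cancel common factors and hence no need to control the coefficients $\beta_\ba,\beta_\bb$. (This is exactly how the paper argues: it does not cancel either, but simply bounds $\prod_{\bff\in\calF}z_\bff^{\alpha_\bff}\leq 1$.) Your concavity argument showing $\beta_\ba,\beta_\bb<1/2$ is correct as a side fact, and your preliminary single-point reduction is also valid, but you could delete both and the proof would close more quickly.
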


\begin{proof}
 We will show that a point $y\in\simplex^\calA$ which does not lie in $|\calS_\lambda|$
 cannot be an accumulation point of  $\{X_{\calA,w_\lambda(t)}\}$.
 If $y\in\simplex^\calA$ but $y\not\in|\calS_\lambda|$, then by~\eqref{Eq:SR-eqs} either there
 are points $\ba,\bb\in\calA$ with $y_\ba y_\bb\neq 0$ where $\{\ba,\bb\}$ do not lie in a
 common face of $\calS_\lambda$, or a single point $\bc\in\calA$ with $y_\bc\neq 0$ and $\bc$ does
 not participate in the decomposition $\calS_\lambda$.
 Set $\epsilon:=\min\{y_\ba,y_\bb\}$ (in the first case) or $\epsilon:=y_\bc$ (in the
 second case).
 We will show that if $t$ is sufficiently large and $z\in X_{\calA,w_\lambda(t)}$, then
 $\min\{z_\ba,z_\bb\}<\epsilon/2$ (in the first case) or $z_\bc<\epsilon/2$ (in the second
 case), which will complete the proof.

 Suppose that we are in the first case.
 Then the interior of the segment $\overline{\ba,\bb}$ meets some face $\Delta_\calF$ of
 $\calT(\calS_\lambda)$.
 If $\calF$ is the minimal such face, then the interiors of $\overline{\ba,\bb}$
 and $\Delta_\calF$ meet in a point $p$, and so we have the valid relation on $X_{\calA,w}$,
 \begin{equation}\label{Eq:special_toric}
   z_\ba^\mu z_\bb^\nu \cdot \prod_{\bff\in\calF}w_\bff^{\alpha_\bff}
   \ =\
   w_\ba^\mu w_\bb^\nu  \cdot  \prod_{\bff\in\calF}z_\bff^{\alpha_\bff} \,,
 \end{equation}
 by Proposition~\ref{Prop:toric_equations}, where
\[
   \dcol{p}\ :=\ \mu\ba+\nu\bb\ =\ \sum_{\bff\in\calF}\alpha_{\bff}\bff
   \qquad\mbox{and}\qquad
   \mu + \nu \ =\ 1\ =\ \sum_{\bff\in\calF}\alpha_{\bff}\,,
\]
 and the coefficients $\mu,\nu,\alpha_\bff$ are positive.
 For $X_{\calA,w_\lambda(t)}$  the relation~\eqref{Eq:special_toric} becomes
\[
   z_\ba^\mu z_\bb^\nu \cdot
   t^{\sum_{\bff\in\calF}\alpha_\bff\lambda(\bff)}\cdot
   \prod_{\bff\in\calF}w_\bff^{\alpha_\bff}
   \ =\
   \prod_{\bff\in\calF}z_\bff^{\alpha_\bff} \cdot
   t^{\mu\lambda(\ba) + \nu\lambda(\bb)} \cdot
   w_\ba^\mu w_\bb^\nu\,.
\]

Since the lift $\overline{(\ba,\lambda(\ba)),(\bb,\lambda(\bb))}$ of $\overline{\ba,\bb}$
does not lie on an upper facet of $P_\lambda$, but the lift of $\Delta_\calF$ does lie on
an upper facet, the point $p$ which is common to $\overline{\ba,\bb}$ and $\Delta_\calF$
is lifted lower on the lift of $\overline{\ba,\bb}$ than on the lift of
$\Delta_\calF$.
We thus have the inequality
 \begin{equation}\label{Eq:deltaDef}
   \mu\lambda(\ba) + \nu\lambda(\bb)\
   <\ \sum_{\bff\in\calF}\alpha_\bff\lambda(\bff)\,.
 \end{equation}
 Let $\delta>0$ be the difference of the two sides of~\eqref{Eq:deltaDef}.
 Then points $z\in X_{\calA,w(t)}$ satisfy
\[
    z_\ba^\mu z_\bb^\nu \ =\
   t^{-\delta}
   \prod_{\bff\in\calF}z_\bff^{\alpha_\bff} \cdot
   \frac{w_\ba^\mu w_\bb^\nu}{\prod_{\bff\in\calF}w_\bff^{\alpha_\bff}}
  \ <\  t^{-\delta} \cdot
   \frac{w_\ba^\mu w_\bb^\nu}{\prod_{\bff\in\calF}w_\bff^{\alpha_\bff}}\,,
\]
 as each component of $z\in\simplex^\calA$ is positive and at most 1.

 This inequality implies that if $t$ is sufficiently large, then at least one of the
 components $z_\ba,z_\bb$ is less than $\epsilon/2$, and thus $y$ is not an accumulation
 point of the sequence.
 A similar argument in the second case of $\bc\not\in\calS_\lambda$ completes the proof.
\end{proof}

We complete the proof of Theorem~\ref{Th:UniLimit} by showing that the set of accumulation
points of $X_{\calA,w_\lambda(t)}$ in $\simplex^\calF$ for $\calF$ a facet of
$\calS_\lambda$ is equal to $X_{\calF,w|_\calF}$, as this proves that
\[
   \lim_{t\to\infty} X_{\calA,w_\lambda(t)}\ =\ \bigcup_{\calF\in\calS_\lambda} X_{\calF,w|_\calF}\,.
\]

\begin{lemma}
 Let $\calF$ be a face of $\calS_\lambda$.
 Then $X_{\calF,w|_\calF}$ is the set of accumulation points of
 $\{X_{\calA,w_\lambda(t)}\mid t\in\R_>\}$ that lie in $\simplex^\calF$,
 and each point of $X_{\calF,w|_\calF}$ is a limit point.
\end{lemma}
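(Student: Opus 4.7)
The plan is to prove the two assertions separately, using the binomial equations of Proposition~\ref{P:equations} together with the fact that $\calF$ arises from a single upper facet of $P_\lambda$. The key geometric input is the existence of an affine function $\ell(x)=m\cdot x+c$, with $m\in\R^d$ and $c\in\R$, whose graph supports this upper facet. Equivalently, $\lambda(\ba)=\ell(\ba)$ for $\ba\in\calF$ and $\lambda(\ba)<\ell(\ba)$ for $\ba\in\calA\setminus\calF$. This single affineness property is what drives the argument.

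First, I will show that any accumulation point $z\in\simplex^\calF$ of $\{X_{\calA,w_\lambda(t)}\}$ lies in $X_{\calF,w|_\calF}$. Let $z^{(k)}\in X_{\calA,w_\lambda(t_k)}$ with $t_k\to\infty$ and $z^{(k)}\to z$. For each convex relation $\sum_{\ba\in\calF}\alpha_\ba\ba=\sum_{\ba\in\calF}\beta_\ba\ba$ with $\sum\alpha_\ba=\sum\beta_\ba$ supported on $\calF$, Proposition~\ref{P:equations} gives
\[
 \prod_{\ba\in\calF}(z^{(k)}_\ba)^{\alpha_\ba}\prod_{\ba\in\calF}w_\ba^{\beta_\ba}\,t_k^{\sum\beta_\ba\lambda(\ba)}
 \;=\;
 \prod_{\ba\in\calF}(z^{(k)}_\ba)^{\beta_\ba}\prod_{\ba\in\calF}w_\ba^{\alpha_\ba}\,t_k^{\sum\alpha_\ba\lambda(\ba)}.
\]
Since $\lambda|_\calF=\ell|_\calF$ is affine, the relations $\sum\alpha_\ba\ba=\sum\beta_\ba\ba$ and $\sum\alpha_\ba=\sum\beta_\ba$ force $\sum\alpha_\ba\lambda(\ba)=\sum\beta_\ba\lambda(\ba)$, so the $t_k$-powers cancel. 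Letting $k\to\infty$ produces the defining equation of $X_{\calF,w|_\calF}$ inside $\simplex^\calF$ for this relation. Varying over all such relations and invoking Proposition~\ref{P:equations} applied to the pair $(\calF,w|_\calF)$ yields $z\in X_{\calF,w|_\calF}$.

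For the converse, I will construct an explicit one-parameter family converging to any $z^*\in X_{\calF,w|_\calF}$. First assume $z^*=[w_\bff x^\bff\mid\bff\in\calF]$ lies in the dense torus orbit, for some $x\in\R_>^d$. Define $y(t)\in\R_>^d$ coordinate-wise by $y(t)_i=t^{-m_i}x_i$, so $y(t)^\ba=t^{-m\cdot\ba}x^\ba$. The corresponding point of $X_{\calA,w_\lambda(t)}$ is
\[
  z(t)\;=\;[\,w_\ba\,t^{\lambda(\ba)}y(t)^\ba\mid\ba\in\calA\,]
         \;=\;[\,w_\ba\,t^{\lambda(\ba)-\ell(\ba)}x^\ba\mid\ba\in\calA\,],
\]
after factoring the common scalar $t^c$ out of the homogeneous coordinates. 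For $\ba\in\calF$ the exponent is $0$, while for $\ba\in\calA\setminus\calF$ it is strictly negative, so as $t\to\infty$ the non-$\calF$ coordinates vanish and $z(t)\to z^*$. This handles the dense orbit; a general $z^*\in X_{\calF,w|_\calF}$ is approximated by a sequence $z^*_n$ of such generic points, and a standard diagonal argument, choosing $t_n$ large enough for each $z^*_n$, produces a family $z(t)\in X_{\calA,w_\lambda(t)}$ with $z(t)\to z^*$.

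The main obstacle is the cancellation of $t$-powers in the first direction, which hinges on $\lambda|_\calF$ extending to an affine function on $\R^d$; this is exactly the defining property of $\calF$ as the set of lattice points on an upper facet of $P_\lambda$, but verifying that it suffices to identify the limiting variety is the heart of the proof. Beyond that, one must check that relations supported on $\calF$ really do cut out $X_{\calF,w|_\calF}$ inside $\simplex^\calF$, which is Proposition~\ref{P:equations} applied directly to $(\calF,w|_\calF)$, and the diagonal argument extending the limit construction to the closure is routine.
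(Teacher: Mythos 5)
Your argument is correct and is essentially the paper's own proof: for one inclusion you cancel the $t$-powers in the binomial equations supported on $\calF$ using that $\lambda|_\calF$ is affine and pass to the limit, and for the other you degenerate along the same explicit one-parameter torus curve (your $\ell$ corresponds to the paper's $\bv$ and $\delta$ via $\ell(\ba)=-\bv\cdot\ba+\delta$); the only small deviation is that you treat boundary points of $X_{\calF,w|_\calF}$ by approximation and a diagonal argument, whereas the paper sidesteps this by the disjoint-union decomposition $X_{\calA,w}(\calS_\lambda)=\coprod_{\calF\in\calS_\lambda}X_{\calF,w|_\calF}^\circ$, which assigns each boundary point to the relative interior of a smaller face. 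One small slip in your framing: when $\calF$ is a lower-dimensional face of $\calS_\lambda$ it is typically the intersection of several upper facets of $P_\lambda$, not ``a single upper facet,'' but the supporting affine function $\ell$ you invoke (equal to $\lambda$ on $\calF$ and strictly larger on $\calA\setminus\calF$) still exists---average the supporting functions of the facets containing $\calF$---and that is all either proof uses.
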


\begin{proof}
 We have that $X_{\calF,w|_\calF}$ is the set of points $z\in\simplex^\calF$ such that
 \begin{equation}\label{Eq:facial}
   \prod_{\bff\in\calF}z_\bff^{\alpha_\bff} \cdot \prod_{\bff\in\calF}w_\bff^{\beta_\bff}
   \ =\
   \prod_{\bff\in\calF}z_\bff^{\beta_\bff} \cdot \prod_{\bff\in\calF}w_\bff^{\alpha_\bff}\,,
 \end{equation}
 whenever $\alpha,\beta\in\R_\geq^\calF$ satisfy
 \begin{equation}\label{Eq:affine_relation}
  \sum_{\bff\in\calF} \alpha_\bff \cdot \bff\ =\
  \sum_{\bff\in\calF} \beta_\bff \cdot \bff
   \qquad\mbox{where}\qquad
  \sum_{\bff\in\calF} \alpha_\bff \ =\
  \sum_{\bff\in\calF} \beta_\bff  \ =:\ m\,.
 \end{equation}
 The equation~\eqref{Eq:facial} also holds on $X_{\calA,w}$, and on
 $X_{\calA,w_\lambda(t)}$, it becomes
 \begin{equation}\label{Eq:lambda_facial}
   \prod_{\bff\in\calF}z_\bff^{\alpha_\bff} \cdot \prod_{\bff\in\calF}w_\bff^{\beta_\bff}
   \cdot t^{\sum_{\bff}\beta_\bff \cdot \lambda(\bff)}
   \ =\
   \prod_{\bff\in\calF}z_\bff^{\beta_\bff} \cdot \prod_{\bff\in\calF}w_\bff^{\alpha_\bff}
   \cdot t^{\sum_{\bff}\alpha_\bff \cdot \lambda(\bff)}\,,
 \end{equation}
 Observe that
\[
   \frac{1}{m} \sum_{\bff\in\calF} \alpha_\bff \cdot \bff \ =\
   \frac{1}{m} \sum_{\bff\in\calF} \beta_\bff \cdot \bff
\]
 is a point in the convex hull of $\calF$.
 Since $\calF$ is a face of the decomposition induced by $\lambda$, the function $\lambda$
 is affine-linear on $\calF$ and so
\[
   \sum_{\bff\in\calF} \alpha_\bff \cdot \lambda(\bff) \ =\
   \sum_{\bff\in\calF} \beta_\bff \cdot  \lambda(\bff)\ .
\]
 Let this common value be $\delta$.
 As dividing~\eqref{Eq:lambda_facial} by $t^\delta$ gives~\eqref{Eq:facial}, we see
 that~\eqref{Eq:facial} is also a valid relation on every
 member of the family $\{X_{\calA,w_\lambda(t)}\mid t\in\R_>\}$, whenever
 $(\alpha_\bff,\beta_\bff\mid\bff\in\calF)$ satisfy~\eqref{Eq:affine_relation}.
 It follows that this set of equations~\eqref{Eq:facial} holds on every accumulation point
 of the family $\{X_{\calA,w_\lambda(t)}\mid t\in\R_>\}$, which implies that those
 accumulation points lying in $\simplex^\calF$ are a subset of $X_{\calF,w|_\calF}$.

 To show the other inclusion, for each face $\calF$ of $\calS_\lambda$, let
 \demph{$X_{\calF,w|_\calF}^\circ$} consist
 of those points $z\in X_{\calF,w|_\calF}$ with $z_\bff\neq 0$ for $\bff\in\calF$.
 Evidently we have
\[
    X_{\calA,w}(\calS_\lambda)\ =\
    \coprod_{\calF\in\calS_\lambda}X_{\calF,w|_\calF}^\circ\,,
\]
 and so it suffices to prove that every point of $X_{\calF,w|_\calF}^\circ$ is a
 limit point of the family $\{X_{\calA,w_\lambda(t)}\mid t\in\R_>\}$.

 Since $\calF$ is a face of $\calS_\lambda$, there is a vector $\bv\in\R^d$ such that the
 function $\calA\to\R$,
\[
   \ba\ \longmapsto\ \bv\cdot\ba+\lambda(\ba)
\]
 is maximized on $\calF$ with maximum value $\delta$.
 That is, if $\bv\cdot\ba+\lambda(\ba)\geq\delta$ with $\ba\in\calA$,
 then $\ba\in\calF$ and $\bv\cdot\ba+\lambda(\ba)=\delta$.

 Consider the action of $t\in\R_>$ on $x\in\R^d_>$ where
\[
   (t*x)_i\ :=\ t^{v_i}x_i\,.
\]
 Let $z\in X_{\calF,w|_\calF}^\circ$.
 Then $z=\varphi_{\calF,w}(x)$ for some $x\in\R^d_>$, and we have
\[
   \varphi_{\calA,w}(t*x)_\ba\ =\ w_\ba\cdot t^{\bv\cdot\ba}x^\ba\,,
\]
 and so, under the action $(t.z)_\ba=t^{\lambda(\ba)} z_\ba$ of $\R_>$ on $\R^\calA_>$, we
 have
\[
  t.\varphi_{\calA,w}(t*x)_\ba\ =\ w_\ba\cdot t^{\bv\cdot\ba+\lambda(\ba)}x^\ba\,.
\]
 Then the line through $t.\varphi_{\calA,w}(t*x)$ is equal to the line through
\[
    t^{-\delta}\bigl(t.\varphi_{\calA,w}(t*x)\bigr)\,,
\]
 whose $\ba$-coordinate is
\[
   w_\ba\cdot t^{\bv\cdot\ba+\lambda(\ba)-\delta}x^\ba\,.
\]
 Since $\bv\cdot\ba+\lambda(\ba)-\delta\leq 0$ with equality only when $\ba\in\calF$,
 we see that the limit of the points $t.\varphi_{\calA,w}(t*x)$ of $\Delta^\calA$
 as $t\to\infty$ is the point $\varphi_{\calF,w|_\calF}(x)=z$,
 which completes the proof.
\end{proof}

%
\section{Proof of Theorem 5.2}\label{Ap:two}

Theorem~\ref{Th:UniLimit} shows that a limit of translates of $X_{\calA,w}$ by a
one-parameter subgroup of $\R^\calA_>$ (a toric degeneration of $X_{\calA,w}$) is a regular control
surface. 
This is a special and real-number case of more general results of Kapranov, Sturmfels, and
Zelevinsky~\cite{KSZ1,KSZ2} concerning all possible toric degenerations of the
complexified toric variety $X_{\calA}(\C)$ that we will use to prove
Theorem~\ref{Th:convex}.

Suppose the $\calA\subset\Z^d$ is a finite set of integer lattice
points. We will assume that $\calA$ is primitive in that differences
of elements of $\calA$ span $\Z^d$:
\[
   \Z^d\ =\    \Z\langle \ba-\ba'\mid \ba,\ba'\in\calA \rangle\,,
\]
that is, $\calA$ affinely spans $\Z^d$ (if not, then simply replace $\Z^d$ by the affine span
of $\calA$).
Let $\P^\calA$ be the complex projective space with homogeneous coordinates $[z_\ba\mid \ba\in\calA]$
indexed by elements of $\calA$.
(These are extensions to all of $\P^\calA$ of the homogeneous coordinates~\eqref{Eq:Homog_Coords},
 which were valid for the nonnegative part of $\P^\calA$.)
The complex torus $H:=(\C^*)^d$ naturally acts on $\P^\calA$ with weights given by the set $\calA$:
$t\in H$ sends the point $z$ with homogenous coordinates $[z_\ba\mid \ba\in\calA]$ to the point 
 $t.z:=[t^\ba z_\ba\mid \ba\in\calA]$.
Note that $X_{\calA,\mathbf{1}}(\C)$ is the closure of the orbit of $H$ through the point 
$\dcol{\mathbf{1}}:=[1:\dotsc:1]$.
For $w\in(\C^*)^\calA$, the translate $w.X_{\calA,\mathbf{1}}(\C)=:\dcol{X_{\calA,w}(\C)}$ is also the
closure of the orbit of $H$ through the point $w$ (considered as a point in $\P^\calA$).
Note that $\simplex^\calA$ is the nonnegative real
part~\cite[Ch.~4]{Fu93} of $\P^\calA$, and when
$w\in\R^\calA_>\subset(\C^*)^\calA$, then $X_{\calA,w}$ is the nonnegative real part of
$X_{\calA,w}(\C)$.

A \demph{toric degeneration} of $X_{\calA,\mathbf{1}}(\C)$ is any translate $X_{\calA,w}(\C)$, or
any limit of translates
\[
   \lim_{t\to 0} \lambda(t).X_{\calA,w}(\C) 
\] 
where $\lambda\colon \C^*\to (\C^*)^\calA$ is a one-parameter subgroup.
This limit is the same limit as in Section~\ref{Ap:one}, its ideal is the limit of the ideals of 
$\lambda(t).X_{\calA,w}(\C)$ as $t\to 0$.
We remark that the data of a one-parameter subgroup of $(\C^*)^\calA$ are equivalent to  
 homomorphisms of abelian groups $\Z^\calA\to\Z$ and thus to functions $\lambda\colon\calA\to\Z$,
which explains our notation $\lambda$.

The translates $X_{\calA,w}(\C)$ for $w\in(\C^*)^\calA$ give a
family of subvarieties of $\P^\calA$, each with the same dimension
and degree, and each equipped with an action of $H$. A main result
of~\cite{KSZ1,KSZ2} identifies all suitable limits of these
translates $X_{\calA,w}(\C)$ with the points of a complex
projective toric variety \dcol{$C_\calA(\C)$}. 
The points of $C_\calA(\C)$
in turn are in one-to-one correspondence with all possible complex
toric degenerations of $X_{\calA,w}(\C)$ as $w$ ranges over
$(\C^*)^\calA$. For a toric degeneration $X$ of a translate of
$X_{\calA,\mathbf{1}}(\C)$, we write \dcol{$[X]$} for the corresponding point of
$C_\calA(\C)$. We will use this result to prove
Theorem~\ref{Th:convex} as follows.

\begin{proof}[Proof of Theorem~$\ref{Th:convex}$]
 Fix control points $\calB=\{\bb_\ba\mid\ba\in\calA\}\subset\R^n$ and suppose that
 $\{w^1,w^2,\dotsc\}$ is a sequence of weights in $\R^\calA_>$ such that the sequence of
 toric patches $\{Y_{\calA,w^i,\calB}\mid i=1,2,\dotsc\}$ converges to a set $Y$ in
 $\R^n$ in the Hausdorff topology.

 Consider the corresponding sequence $\{X_{\calA,w^i}(\C)\mid i=1,2,\dotsc\}$ of torus
 translates of $X_{\calA}(\C)$.
 This gives a sequence $[X_{\calA,w^i}(\C)]$ of points in the projective
 toric variety $C_{\calA}(\C)$.
 Since $C_{\calA}(\C)$ is compact, this sequence of points has a convergent subsequence
 whose limit point is a toric degeneration 
\[
  \lim_{t\to\infty} X_{\calA,w_\lambda(t)}(\C)\ =\ 
   X_{\calA,w}(\calS_\lambda)(\C)\ =\ 
   \bigcup_{\calF\in\calS_\lambda} X_{\calF,w|_\calF}(\C)
\]
 of $X_{\calA,w}(\C)$ for some
 $w\in(\C^*)^\calA$ and lifting function $\lambda\colon\Z^\calA\to\Z$.
 Replacing the original weights $\{w^i\}$ by this subsequence, we may assume that, as
 points of $C_\calA(\C)$, we have
 \begin{equation}\label{Eq:Chow_limit}
   \lim_{i\to\infty} [X_{\calA,w^i}(\C)]\ =\
   [X_{\calA,w}(\calS_\lambda)(\C)]\ =\
   \lim_{t\to\infty}\, [X_{\calA,w_\lambda(t)}(\C)]\,.
 \end{equation}
 The points $[X_{\calA,w^i}(\C)]$ of $C_\calA(\C)$ are translates of the base point
 $[X_\calA(\C)]$ by elements of $\R_>^\calA\subset(\C^*)^\calA$, and so they lie in the
 nonnegative real part of the toric variety $C_\calA(\C)$,  and therefore so does their
 limit point.
 But by~\eqref{Eq:Chow_limit} this limit point is a translate of $X_\calA(S_\lambda)(\C)$,
 and thus it is a translate by a real weight.
 This shows that we may take the weight $w$ in~\eqref{Eq:Chow_limit} to be real.

 Theorem~\ref{Th:convex} will follow from this and the claim that if 
 a sequence $\{[X_i]\mid i\in\N\}\subset C_\calA(\C)$ converges to a point 
 $[X]$ of $C_\calA(\C)$ in the (usual) analytic topology, then the
 sequence of corresponding subvarieties $\{X_i\}$ converges to $X$ in the Hausdorff metric on
 subsets of $\P^\calA$.
 Given this claim,~\eqref{Eq:Chow_limit} implies that in the Hausdorff topology on subsets of
 $\P^\calA$, 
\[
  \lim_{i\to\infty} X_{\calA,w^i}(\C)\ =\
  \lim_{t\to\infty} X_{\calA,w_\lambda(t)}(\C)\ =\
  X_{\calA,w}(S_\lambda)(\C)\,.
\]
 We may restrict this to their real points to conclude that the limit of patches
 $X_{\calA,w^i}$ in $\simplex^\calA$,
\[
  \lim_{i\to\infty} X_{\calA,w^i}\ =\
  \lim_{t\to\infty} X_{\calA,w_\lambda(t)}\ =\
  X_{\calA,w}(S_\lambda)\,,
\]
 is a regular control surface.
 Since $Y_{\calA,w^i,\calB}=\pi_\calB(X_{\calA,w^i})$, the limit
 $\lim_{i\to\infty}Y_{\calA,w^i,\calB}$ equals
\[
  \lim_{i\to\infty} \pi_\calB(X_{\calA,w^i})\ =\
  \pi_{\calB}\bigl(\lim_{i\to\infty} X_{\calA,w^i}\bigr)\ =\
  \pi_{\calB}\bigl(X_{\calA,w}(\calS_\lambda)\bigr)\ =\
  Y_{\calA,w,\calB}(\calS_\lambda)\,,
\]
 which is a regular control surface.
 This will complete the proof of Theorem~\ref{Th:convex}, once we have proven the claim.
\end{proof}

\begin{proof}[Proof of claim]
 As shown in~\cite{KSZ1,KSZ2}, the projective toric variety $C_\calA(\C)$ is the Chow quotient
 of $\P^\calA$  by the group $H=(\C^*)^d$ acting via the weights of $\calA$.
 We explain this construction.
 Let $D=d!\cdot\vol(\Delta_\calA)$, which is the degree of the projective toric variety
 $X_\calA(\C)$, as well as any of its translates.
 Basic algebraic geometry (see~\cite[Lect.~21]{Harris}) gives us the existence of a complex
 projective variety \dcol{$C(D,d,\calA)$}, called the \demph{Chow variety}, whose points
 are in one-to-one correspondence with $d$-dimensional cycles in $\P^\calA$ of degree $D$.
 These are formal linear combinations
 \begin{equation}\label{Eq:cycle}
   Z\ :=\ \sum_{i=1}^m D_j Z_j
 \end{equation}
 where each coefficient $D_j$ is a nonnegative integer, each $Z_j$ is a reduced and
 irreducible subvariety of $\P^\calA$ of dimension $d$, and we have
\[
   D\ =\ \sum_{j=1}^m D_j\cdot \degree(Z_j)\,.
\]
 In particular all translates $X_{\calA,w}(\C)$ are represented by points of
 $C(D,d,\calA)$.

 The torus $(\C^*)^\calA$ (in fact $(\C^*)^\calA/H$) acts on $\P^\calA$ and thus on
 $C(D,d,\calA)$, with the points representing the translates $X_{\calA,w}$ forming a single orbit.
 The \demph{Chow quotient $C_\calA(\C)$} is the closure of this orbit in $C(D,d,\calA)$.
 An explicit description of $C_\calA(\C)$ in terms of a quotient fan or the secondary
 polytope of $\calA$ is more subtle and may be found in~\cite{KSZ1,KSZ2}.
 We do not need this description to prove Theorem~\ref{Th:convex}, although such a
 description could be used to help identify the limit control surface whose existence is
 only asserted by Theorem~\ref{Th:convex}.

 The points of $C_\calA(\C)$ correspond to toric degenerations of translates
 $X_{\calA,w}(\C)$.
 We describe this, associating a cycle of degree $D$ and dimension $d$ to any toric
 degeneration.
 Let $\lambda\colon\Z^\calA\to\Z$ be any lifting function with corresponding regular
 decomposition $\calS_\lambda$ of $\calA$.
 Let $\calF\subset\calA$ be a facet of $\calS_\lambda$.
 Set \dcol{$\delta_\calF$} to be the index in $\Z^d$ of the lattice
\[
   \Z\langle \bff-\bff' \mid \bff,\bff'\in\calF\rangle
\]
 spanned by differences of elements of $\calF$.
 Then $X_{\calF,w|_\calF}(\C)$ is a subvariety of $\P^\calF$ of dimension $d$ and degree
\[
   d!\cdot \vol(\Delta_\calF)/\delta_\calF\,.
\]
 The toric degeneration
\[
   \lim_{t\to\infty} X_{\calA,w_\lambda(t)}(\C)\ =\
   \bigcup_{\calF\mbox{ a facet of }\calS_\lambda}X_{\calF,w|_\calF}(\C)
\]
 (this is a set-theoretic limit) corresponds to the cycle
\[
   \sum_{\calF\mbox{ a facet of }\calS_\lambda} \delta_\calF X_{\calF,w|_\calF}(\C)\,,
\]
 which has degree $D=d!\cdot\vol(\Delta_\calA)$.

 We now prove the claim.
 Following Lawson~\cite[Sect.~2]{Lawson}, Kapranov, Sturmfels, and
 Zelevinsky~\cite[Sect.~1]{KSZ1} associate to a cycle~\eqref{Eq:cycle} a current on
 $\P^\calA$---the linear functional $\int_Z$ of integrating a smooth $2d$-form over the cycle $Z$.
 The analytic topology on the Chow variety is equivalent to the weak topology on 
 currents.
 (The weak topology is the topology of pointwise convergence: A sequence $\{\psi_i\mid i\in\N\}$
 of currents converges to a current $\psi$ if and only if for every $2d$-form $\omega$ on
 $\P^\calA$ we have $\lim_{i\to\infty} \psi_i(\omega)=\psi(\omega)$, as complex numbers.)

 Suppose that $\{[X_i]\}\subset C_\calA(\C)$ is a convergent sequence of points in the usual
 analytic topology on $C_\calA(\C)$, with limit point $[X]$,
\[
    \lim_{i\to \infty} [X_i]\ =\ [X]\,.
\]
 Then the associated currents converge.
 That is, for every smooth $2d$-form $\omega$, we have
 \begin{equation}\label{Eq:int}
   \lim_{i\to\infty}\int_{X_i}\,\omega\ =\ \int_X\,\omega\,.
 \end{equation}
 We use this to show that $\lim_{i\to\infty} X_i=X$, in the Hausdorff metric.

 Given a point $x\in X$ and a number $\epsilon>0$, let $\omega$ be a $2d$-form with
 $\int_X\omega\neq 0$ which vanishes outside the ball $B(x,\epsilon)$ of radius $\epsilon$ around $x$.
 Then~\eqref{Eq:int} implies that there is a number $M$ such that if $i>M$, then 
 $\int_{X_i}\omega\neq 0$, and  thus $X_i\cap B(x,\epsilon)\neq\emptyset$.
 Since $X$ is compact, there is some number $M$ such that if $i>M$, then every point of $X$ 
 is within a distance $\epsilon$ of a point of $X_i$.

 To complete the proof of the claim, we need to show that for every number $\epsilon>0$, there
 is a number $M$ such that if $i>M$, then every point of $X_i$ lies within a distance $\epsilon$ of
 $X$.
 Suppose that this is not true.
 That is, there is a number $\epsilon>0$ such that for every number $M$,
 there is some $i>M$ such that $X_i$ has a point whose distance from $X$ exceeds
 $\epsilon$.
 Replacing $\{X_i\}$ by a subsequence, we may assume that each $X_i$ has a point $x_i$ whose distance from
 $X$ exceeds $\epsilon$.
 It is no loss to assume that the points $x_i$ are smooth.
 By the compactness of $\P^\calA$ and of the Grassmannian of $d$-dimensional linear subspaces of 
 $\P^\calA$, we may replace $\{X_i\}$ by a subsequence and assume that the points $x_i$ converge to
 a point $x$, and that the tangent spaces $T_{x_i}X_i$ also converge to a linear space $L$. 
 It follows that there is a $2d$-form $\omega$ which vanishes outside of $B(x,\epsilon/2)$
 and $\int_L\omega\neq 0$.
 By our assumption on the sequence of tangent spaces $T_{x_i}X_i$, we will have
\[
   \lim_{i\to \infty} \int_{X_i} \omega\ \neq\  0\,.
\]
 But then~\eqref{Eq:int} implies that $\int_X\omega\neq 0$, and so
 $X\cap B(x,\epsilon/2)\neq\emptyset$, which contradicts our assumption that $X$ is 
 the limit the spaces $X_i$.

 This completes our proof of the claim, and therefore of Theorem~\ref{Th:convex}.
\end{proof}
\def\cprime{$'$}
\providecommand{\bysame}{\leavevmode\hbox to3em{\hrulefill}\thinspace}
\providecommand{\MR}{\relax\ifhmode\unskip\space\fi MR }
\providecommand{\MRhref}[2]{%
  \href{http://www.ams.org/mathscinet-getitem?mr=#1}{#2}
}
\providecommand{\href}[2]{#2}

\end{document}